\documentclass[copyright,creativecommons]{eptcs}
\providecommand{\event}{AFL 2026} 

\usepackage{amssymb}
\usepackage{amsmath}
\usepackage[capitalise,nameinlink]{cleveref}
\usepackage{tikz}
\usetikzlibrary{shapes.geometric,backgrounds,calc}

\newcommand{\IN}{\mathbb N}

\newcommand\ma[1]{{\cal#1}}

\newcommand\pda{\textsc{pda}}
\newcommand\pdas{\textsc{pda}s}

\newcommand\oca{\textsc{oca}}
\newcommand\ocas{\textsc{oca}s}

\newcommand{\strong}{{\sf strong}}
\newcommand{\accept}{{\sf accept}}
\newcommand{\weak}{{\sf weak}}

\newcommand{\Qincr}[1]{\mbox{$#1^{\scriptstyle\uparrow}$}}
\newcommand{\Qdecr}[1]{\mbox{$#1_{\scriptstyle\downarrow}$}}
\newcommand{\Qstay}[1]{\mbox{$\overline#1$}}

\newcommand*{\qed}{\mbox{}\nolinebreak\hfill~\raisebox{0.77ex}[0ex]{\framebox[1ex][l]{}}}

\newtheorem{theorem}{Theorem}

\newtheorem{lemma}{Lemma}
\newtheorem{corollary}{Corollary}

\newtheorem{definition}{Definition}
\newenvironment{proof}{\noindent{\em Proof.}}{\bigskip\noindent}

\newtheorem{ex}{Example}

\newenvironment{example-cont}[1]{\bigskip\noindent\textbf{Example~\ref{#1}.~(cont.)\hspace{\labelsep}}}{\bigskip\noindent}

\title{Turn Complexity and Bounded Languages}
\author{Giovanni Pighizzini
\institute{Dipartimento di Informatica\\
Universit\`{a} degli Studi di Milano, Italy}
\email{pighizzini@di.unimi.it}
}
\def\titlerunning{Turn Complexity and Bounded Languages}
\def\authorrunning{G.~Pighizzini}

\begin{document}
\maketitle

\begin{abstract}
	A \emph{turn} or \emph{reversal} in a computation of a pushdown automaton is a
	switch from a phase in which the height of the pushdown store increases
	to a phase in which it decreases.
	Given a pushdown automaton, we first consider, for each string in its language, the minimum number of turns made in accepting computations
	(\weak\ measure).
	We prove that it is decidable whether a pushdown automaton accepts a bounded language in a constant number of turns and whether it accepts a bounded language  in~$k$ turns,
	for any given~$k\geq 0$. This is in contrast to the general case, in which these problems
	are known to be undecidable, with the exception of acceptance in~$0$ turns, which is decidable.
	Furthermore, we prove that when the number of turns sufficient to accept a bounded language is not limited by any constants,
	it  linearly grows with respect to the input length. Also this is in contrast with the general
	case where, for each nonnegative~$k$, there exists a language for which the number of turns
	necessary and sufficient is of the order of~$\log^{(k)}$, the~$k$ times composition of the logarithm
	with itself.
	We also prove that, when the costs of all accepting computations are taken into account (\accept\ measure), a linear lower bound for the
	number of turns, if not limited by any constants, holds even removing the restriction to bounded languages.
\end{abstract}

\section{Introduction}

In previous works, we studied different complexity measures for pushdown automata.
The \emph{pushdown height} takes into consideration the maximal height reached by the pushdown store in a computation~\cite{PP23,PP25},
the \emph{push complexity} counts the number of push operations~\cite{Pig26}, while the \emph{turn complexity} considers the number of switches
of the pushdown from an increasing to a decreasing phase~\cite{Pig25}.\footnote{%
	As in the paper by Ginsburg and Spainer~\cite{GS66}, in which this notion was investigated for the first time, and in our
	previous work on this topic~\cite{Pig25}, here we use the term~\emph{turn}.
	Sometimes turns are called~\emph{reversals} (e.g.,~\cite{FKMMZ23}).
}

For pushdown height and push complexity, a constant bound immediately implies the possibility of encoding the pushdown store in a finite
control and, hence, the regularity of the accepted language. 
In case these measures are not bounded by any constants, they grow at least as a double logarithmic function in the length of the input
if, for each accepted string, only the cost of the \emph{less expensive accepting computation} (\weak\ measure) is taken into account.
Furthermore, there is a machine which uses exactly such an amount. Hence, these lower bounds cannot be improved,
except in the case of \emph{unary languages}, i.e., languages defined over a one-letter input alphabet, in which the optimal lower bounds 
in the noncostant case are logarithmic for pushdown height and linear for push complexity~\cite{PP23,Pig26}.
We point out that the idea of considering, for each accepted input, the less expensive accepting computation is related to the idea of
nondeterminism in a broad sense: when more choices are possible, a nondeterministic machine is able not only to select the one that leads to acceptance,
if any, but among all possibilities leading to acceptance the machine is able to select the less expensive one, with respect to resource under consideration.

In contrast, when considering the costs of \emph{all accepting computations}, i.e., for each accepted input the cost of most expensive accepting computation (\accept\ measure),
the previous lower bounds for pushdown height and push complexity become linear, regardless the size of the input alphabet.
As a consequence, also in the deterministic case we obtain linear lower bounds~\cite{PP25}.

In the case of turn complexity, the situation is very different.
First of all, pushdown automata using a constant number of turns accept the class of \emph{ultralinear languages}~\cite{GS66},
a proper subclass of context-free languages, larger than the class of regular languages. Indeed, already one turn gives to pushdown automata more power than finite automata.
Concerning the nonconstant case, considering the \weak\ measure, we proved the existence of an infinite hierarchy:
for each integer~$k\geq 0$ there exists a language that can be accepted using a number of turns of the order of~$\log^{(k)} n$,
but cannot be accepted using a number of turns of the order of~$\log^{(k+1)} n$ under the \weak\ measure (here, $\log^{(k)}$ denotes the~$k$-times
composition of the logarithmic function with itself). Furthermore, there exists a nonregular language which is accepted in~$\log^*n$ turns
(the iterated logarithm)~\cite{Pig25}.

In the witness pushdown automata used to prove these results, a fundamental role is played by the nondeterminism combined with the
structure of the strings in the accepted languages. So, it is natural to ask what happens if we limit the role of the nondeterminism, by
considering the costs of all accepting computations (\accept\ measure)\footnote{%
	According to the previous observation of the interpretation of the nondeterminism in the \weak\ measure, 
	in the \accept\ measure the role of the nondeterminism is restricted only to the selection of the accepting strategy, but it is not used to select
	the less expensive one.%
} or the costs for deterministic machines only,
or if we remove or limit the possibility to have special structures encoded in the accepted strings.
Concerning the latter point, we could ask what happens in the case of \emph{pushdown automata} with a unary input alphabet.
Here, we attack and solve this problem in a more general case. We consider machines accepting \emph{word-bounded languages}, i.e.,
subsets of~$w_1^*w_2^*\cdots w_m^*$, where~$w_1,w_2,\ldots,w_m$ are fixed words over a given alphabet.
We prove that for pushdown automata accepting word-bounded languages, under the \weak\ measure, if the number of turns is nonconstant
then it linearly grows with respect to the input length.
Concerning the former point, we also prove a linear lower bound for the number of turns under the \accept\ measure, when nonconstant,
even if the restriction to word-bounded languages is dropped.

\medskip

%
Before considering these complexity questions, we investigate the problem of deciding whether a pushdown automaton accepts in a finite number
of turns and when it accepts in~$k$ turns, for any fixed~$k$.
Under the \accept\ measure these problems are decidable~\cite{GS66}.
In contrast, under the \weak\ measure they are undecidable, even if the machine is a one-counter automaton, with the exception of acceptance
in~$0$ turns~\cite{Pig25}.
The arguments used to prove these undecidability results derive from classical arguments related to encodings of Turing 
machine computations and do not work in the case the accepted language is restricted to be unary or, more in general, to be bounded.
So it is  natural to ask what happens under these restrictions.
In the paper, we investigate this question and we prove that the above-mentioned problems are decidable, 
namely it is decidable whether a pushdown automaton accepts a word-bounded language in a finite number of turns and when it accepts a word-bounded language in~$k$ turns, for any fixed~$k$.

In the general case, similar undecidability results for pushdown and one-counter automata have been proved, under the \weak\ measure, considering pushdown height and 
push complexity, while decidability has been obtained for the restriction to unary languages~\cite{PP23,PP25,Pig26}.
Results concerning the number of turns in pushdown automata and in the restricted case of pushdown automata accepting bounded language 
are presented in~\cite{Mal07,MP13}, also including relationships between the number of turns and the size of the descriptions of machines.

%

\medskip

The tools we use to prove our decidability and complexity results derive from the theory of \emph{semilinear sets}  
and \emph{Presburger Arithmetic}, and the well-known fact that Parikh images of context-free languages are
semilinear~\cite{GS66b,Parikh1966}. 
In particular, for the results concerning the \weak\ measure and the word-bounded case, 
we use the fact that there is a one-to-one correspondence between strings in a strictly
letter-bounded language accepted by a pushdown automaton and the vectors in its Parikh image.\footnote{%
	A \emph{strictly letter-bounded language} is a subset of~$a_1^*a_2^*\cdots a_m^*$, where~$a_1,a_2,\ldots,a_m$
	are pairwise different symbols.
}
We first solve the strictly letter-bounded case and then we reduce the word bounded case to it.
For the results concerning the \accept\ case, we are interested only in the length of strings, so we do
not need such correspondence. Hence, by adapting the same techniques, we are able to formulate them
by removing the restriction to word bounded-languages.

\section{Preliminaries}
\label{sec:prel}

We assume that the reader is familiar with the standard notions of automata and formal language theory as presented in textbooks, e.g.,~\cite{HU79}.
Given an alphabet~$\Sigma$, the set of strings over~$\Sigma$ is denoted by~$\Sigma^*$, with the empty
string denoted by~$\varepsilon$. The length of a string~$x\in\Sigma^*$ is denoted as~$|x|$, while the number of occurrences of a symbol~$a\in\Sigma$ in~$x$ as~$|x|_a$. 
Given a set~$S$, its cardinality is denoted as~$\# S$, the family of
its subsets as~$2^S$, its complement as~$S^c$.
Given a string~$w\in\Sigma^*$ and a subset~$\Sigma'$ of~$\Sigma$, let us denote by~$\pi_{\Sigma'}(x)$ the string obtained by deleting
from~$x$ all the symbols that do not belong to~$\Sigma'$.

A language~$L\subseteq\Sigma^*$ is said to be a \emph{word-bounded language} (or simply \emph{bounded language}) if it is a subset 
of~$w_1^*w_2^*\cdots w_m^*$, for some~$w_1,w_2,\ldots,w_m\in\Sigma^*$.
If each of these strings consists only of one symbol, i.e.,~$w_1,w_2,\ldots,w_m\in\Sigma$, then~$L$ is said to be a \emph{letter-bounded language}.
Furthermore, if these symbols are pairwise different, then~$L$ is a \emph{strictly letter-bounded} language.

\subsection{Pushdown automata and turns}

We shortly remember the notion of pushdown automata as presented in~\cite{GS66,HU79}.
A \emph{pushdown automaton} (\pda, for short) is defined as~$\ma{M}=(Q,\Sigma,\Gamma,\delta,q_0,Z_0)$,
where~$Q$, $\Sigma$, and $\Gamma$ are finite sets: the set of states, the input alphabet, the pushdown alphabet, respectively;
$q_0\in Q$ is the initial state; $Z_0\in\Gamma$ is the start symbol on the pushdown store;
$\delta$ is the \emph{transition function} from~$Q\times(\Sigma\cup\{\varepsilon\})\times\Gamma$ to the finite subsets of~$Q\times\Gamma^*$,
where~$(p,\alpha)\in\delta(q,\sigma,A)$, with~$p,q\in Q$, $\sigma\in\Sigma\cup\{\varepsilon\}$, $A\in\Gamma$, $\alpha\in\Gamma^*$,
means that the automaton in the state~$q$, with~$A$ on the top of the pushdown store, reading the symbol~$\sigma\in\Sigma$ from
the tape, or without reading any symbol when~$\sigma=\varepsilon$, can move in the state~$p$ after replacing~$A$ on the top of the
stack with the string~$\alpha$.\footnote{%
	We use the convention that pushdown strings are written from top to bottom; hence, when~$\alpha\neq\varepsilon$, the leftmost symbol of~$\alpha$ will be
	on the top of the stack after this transition.
}
We assume \emph{acceptance by empty store}, namely an input~$w$ is accepted when there exists a computation starting from
the initial configuration  (input head on the first symbol of~$w$, finite control in the state~$q_0$, pushdown store containing only~$Z_0$)
that, according to the transition function, reaches a configuration in which all the input has been scanned (i.e., the input head is to 
the right of the rightmost symbol of~$w$) and the pushdown store is empty.
It is well known that the class of language accepted by these devices coincides with the class of 
\emph{context-free languages}. 


A \emph{turn} in a computation~$\cal C$ of a \pda\ is a sequence of~$k\geq 2$ moves
such that the height of the pushdown increases in the first move, decreases in the last move,
and it does not change in between, namely a turn is defined by~$k+1$ consecutive configurations, in which the
pushdown contents are~$\gamma_0, \gamma_1, \ldots, \gamma_k$, and~$|\gamma_0|<|\gamma_1|=|\gamma_2|=\cdots=|\gamma_{k-1}|>|\gamma_k|$.

According to~\cite{GS66}, a \pda~$\ma{M}$ is said to be~\emph{$k$-turn}, for an integer~$k\geq 0$, when \emph{each accepting computation} of~$\ma{M}$ contains at
most~$k$ turns.  A \pda~$\ma{M}$ is \emph{finite turn} if it is~$k$-turn for some~$k\geq 0$.
It is easy to see that~$0$-turns \pdas\ recognize only regular languages.
Furthermore, the class of languages accepted by~$1$-turn \pdas\ is the class of~\emph{linear languages}. 
This class is strictly included in the class of languages accepted by finite-turn \pdas, which is the class of 
\emph{ultralinear languages}, a proper subclass of context-free languages.

In the above definitions, the number of turns in \emph{each accepting computation} is taken into account. This measure is called \accept\ measure.  If we consider the number of turns in
\emph{all computations}, regardless the fact that they are accepting or not, we obtain
the \strong\ measure.

Since in a nondeterministic machine several computations using different numbers of turns could be possible, we will mainly focus on a different notion, which is called \weak\ measure. 
In this measure, we are going to consider the number of turns
which are \emph{sufficient} to accept input strings.
More precisely, we give the following definition for the \weak\ measure:

\begin{definition}
\label{def:turns}
	Let~$\ma{M}$ be a \pda\ accepting a language~$L$.
	We say that \emph{a string~$w\in L$ is accepted in~$k$ turns}, for an integer~$k\geq 0$, 
	when there exists a computation of~$\ma{M}$ accepting~$w$ and containing at most~$k$ turns.
	
	Given a function~$t:\IN\rightarrow\IN$, we say that \emph{$\ma{M}$ accepts~$L$ in~$t(n)$ turns} when, for each~$w\in L$
	with~$|w|=n$, there is an accepting computation on~$w$ containing at most~$t(n)$ turns. 
	Furthermore, if the function~$t(n)$ is bounded by a constant, then we say that \emph{$\ma{M}$ accepts in a finite number of turns}.
\end{definition}

If we know that a \pda\ accepts in~$k$ turns (\weak\ measure), for some given~$k\geq 0$, then we can obtain from it an equivalent~$k$-turn \pda\ (\accept\ measure) just
counting in the finite control the number of turns and stopping the computation when such a number exceeds~$k$. Actually, the resulting automaton makes exactly~$k$ turns in each computation, hence it satisfies also the \strong\ measure.
However, acceptance in~$k$ turns cannot be decided, unless~$k=0$~\cite{Pig25}.

In the following, unless differently specified, \emph{we always consider the \weak\ measure}.


\medskip

A \emph{one-counter automaton} (\oca, for short) is a pushdown automaton in which the pushdown alphabet
consists only of one symbol~$A$, besides~$Z_0$. The symbol~$Z_0$ is used only to mark the bottom. 
In this way, the pushdown store always contains a string of the form~$A^kZ_0$, representing the integer~$k\geq 0$, with
the exception of accepting configurations, in which the store is empty.
It is well known that the class of languages recognized by \ocas\ is a proper subclass of context-free languages and 
properly contains the class of regular languages.

\subsection{Semilinear sets, Presburger Arithmetic, Parikh Images}

In this subsection we present some of the tools that will be crucial to prove our results.

Let us denote by~$\IN$ the set of nonnegative integers and, for an integer~$m>0$, by~$\IN^m$ the set of vectors of~$m$ elements from~$\IN$,
with the usual operations of vector sum and multiplication of an integer by a vector.
A subset~$S\subseteq\IN^m$ is called \emph{linear} when is can be expressed as 
\[
	S = \{{\mathbf x}_0+h_1{\mathbf x}_1+\cdots+h_k{\mathbf x}_k \mid h_1,\ldots,h_k\in\IN \},
\]
for some vectors~${\mathbf x}_0, {\mathbf x}_1, \ldots, {\mathbf x}_k\in\IN^m$.
A \emph{semilinear set} in~$\IN^m$ is a finite union of linear subsets of~$\IN^m$.

\medskip

Semilinear sets can be characterized using \emph{Presburger Arithmetic}, which is defined
in terms of some formulas over integers, called \emph{Presburger formulas}~\cite{GS66b}.

A Presburger formula over~$\IN$ is defined from constants~$0$, $1$, and integer variables,
by using the operator~$+$, the predicate~$=$, the logical connective~$\vee, \wedge, \neg$, with
the standard meaning, together with the quantifiers~$\exists$ and~$\forall$ that can be applied
only to variables.

As proved in~\cite{GS66b}, the family of semilinear sets over~$\IN^m$ coincides with the family of sets defined by Presburger formula 
with~$m$ free variables. Furthermore, Presburger arithmetic is decidable, namely given a \emph{Presburger sentence}, i.e.,
a Presburger formula without free variables, it can be decided whether it is true~\cite{HB34}.

\begin{lemma}
\label{lemma:minSemilinear}
	Let set~$S\subseteq\IN^{1+m}$, $m\geq 0$, be a set.
	\begin{enumerate}
	\item If~$S$ is semilinear, then the following two sets are semilinear:
	\[
	S_{\min}=\{(i_0,i_1,\ldots,i_m)\in S\mid\forall k\in\IN,
	\mbox{ if $(k,i_1,\ldots,i_m)\in S$ then $i_0\leq k$}\}\,.
	\]
	\[
	S_{\max}=\{(i_0,i_1,\ldots,i_m)\in S\mid\forall k\in\IN,
	\mbox{ if $(k,i_1,\ldots,i_m)\in S$ then $i_0\geq k$}\}\,.
	\]
	\item Given~$\alpha_1,\alpha_2,\ldots,\alpha_m\in\IN$, if~$S$ is linear (semilinear, resp.), then the following set~$T\subseteq\IN^2$ is linear (semilinear, resp.):
	\[
	T = \{(i_0,\alpha_1i_1+\alpha_2i_2+\cdots+\alpha_mi_m) \mid (i_0,i_1,\ldots,i_m)\in S\}\,.
	\]
	\end{enumerate}
\end{lemma}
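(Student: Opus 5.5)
For the first item I will use the characterization of semilinear sets as the sets definable by Presburger formulas~\cite{GS66b}. Since $S$ is semilinear, there is a Presburger formula $\varphi_S(x_0,x_1,\ldots,x_m)$ with $m+1$ free variables defining $S$. The relation $x\leq y$ is Presburger-definable as $\exists z\,(x+z=y)$. Then $S_{\min}$ is defined by
\[
\varphi_S(x_0,x_1,\ldots,x_m)\wedge\forall k\,\bigl(\varphi_S(k,x_1,\ldots,x_m)\rightarrow x_0\leq k\bigr),
\]
where $\rightarrow$ abbreviates $\neg\cdot\vee\cdot$. Symmetrically, $S_{\max}$ is defined by the same formula with $x_0\geq k$ in place of $x_0\leq k$. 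Both are Presburger formulas with $m+1$ free variables, so, by the converse direction of the characterization, both sets are semilinear. The only point to check is that $\varphi_S$ can be instantiated with the bound variable $k$ in the first position, which is a harmless renaming of variables.

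For the second item I will work directly with the linear representation rather than with Presburger arithmetic, because the claim that linearity is preserved is not a consequence of Presburger definability. The map $f:\IN^{1+m}\to\IN^2$ given by $f(i_0,i_1,\ldots,i_m)=(i_0,\alpha_1i_1+\cdots+\alpha_mi_m)$ is linear over $\IN$ and has nonnegative integer coefficients, so it sends $\IN^{1+m}$ into $\IN^2$. Suppose
\[
S=\{\mathbf x_0+h_1\mathbf x_1+\cdots+h_k\mathbf x_k\mid h_1,\ldots,h_k\in\IN\}.
\]
By linearity, $T=f(S)=\{f(\mathbf x_0)+h_1f(\mathbf x_1)+\cdots+h_kf(\mathbf x_k)\mid h_j\in\IN\}$, and this is a linear set with constant vector $f(\mathbf x_0)$ and periods $f(\mathbf x_j)$. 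If $S$ is semilinear, it is a finite union $S=\bigcup_j S_j$ of linear sets, so $f(S)=\bigcup_j f(S_j)$ is a finite union of linear sets, that is, semilinear.

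I expect no real obstacle. The only care needed is in item 2: the nonnegativity of the $\alpha_i$ ensures that the image vectors lie in $\IN^2$, and the image must be taken componentwise over the linear pieces.
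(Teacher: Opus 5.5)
Your proof is correct and follows the paper's argument essentially verbatim. Item~1 uses the Presburger formula $\varphi_S(x_0,\ldots,x_m)\wedge\forall k\,(\varphi_S(k,x_1,\ldots,x_m)\rightarrow x_0\leq k)$ and its $\geq$ variant, and item~2 maps the constant and period vectors of each linear piece through $(i_0,\ldots,i_m)\mapsto(i_0,\alpha_1i_1+\cdots+\alpha_mi_m)$, exactly as the paper does with its vectors $\mathbf{y}_i$.
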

\begin{proof}
	We observe that the set~$S_{\min}$ is defined by taking, among all vectors in~$S$ that coincide on the last~$m$ components,
	the one having minimal first component.
	
	Let~$S$ be semilinear.
	Given a Presburger formula~$\phi(x_0,x_1,\ldots,x_m)$ corresponding to the~$S$,
	we consider the following formula
	\[
	\phi_{\min}(x_0,x_1,\ldots x_m)=\phi(x_0,x_1,\ldots x_m)\wedge(\forall y(\phi(y,x_1,\ldots x_m)\rightarrow(x_0\leq y)))\,.
	\]
	Since the predicate~$x\leq y$ can be expressed as~$\exists z(y=x+z)$ and~$f\rightarrow g$ as~$\neg f\vee g$, we can easily conclude that~$\phi_{\min}$ is a Presburger formula defining the set~$S_{\min}$.
	Hence, if~$S$ is semilinear then~$S_{\min}$ is semilinear.
	
	In a similar way, a vector~$(i_0,i_1,\ldots,i_m)$ belongs to the set ~$S_{\max}$ if~$i_0$ is the maximum value of the first component among
	all the vectors in~$S$ whose other components are~$i_1,\ldots,i_m$. In order to be defined, it is necessary that the number of such vectors is finite.
	The formula~$\phi_{\max}$ corresponding to~$S_{\max}$ can be easily obtained by replacing~$x_0\leq y$ by~$y\leq x_0$ in~$\phi_{\min}$.

\medskip

	We now prove the second statement in the linear case.	
	Suppose~$S =\{{\mathbf x}_0+h_1{\mathbf x}_1+\cdots+h_k{\mathbf x}_k \mid h_1,\ldots,h_k\in\IN \}$.
	For~$i=0,\ldots,k$, let~${\mathbf x}_i=(x_{i,0},x_{i,1},\ldots,x_{i,m})$ and consider the vector~${\mathbf y}_i\in\IN^2$
	defined as~${\mathbf y}_i=(x_{i,0}, \alpha_1x_{i,1}+\cdots+ \alpha_mx_{i,m})$.
	An easy verification shows that~~$T =\{{\mathbf y}_0+h_1{\mathbf y}_1+\cdots+h_k{\mathbf y}_k \mid h_1,\ldots,h_k\in\IN \}$.
	The extension to the semilinear case is obvious.
\qed
\end{proof}

The next lemma formalizes the fact that if an unbounded function~$f$ can be represented as a semilinear set, then it should linearly grow:

\begin{lemma}
\label{lemma:linearGrow}
	Let~$S\subseteq\IN^2$ be a semilinear set such that for each~$n\in\IN$ there exists at most one~$i_0\in\IN$ with~$(i_0,n)\in S$.
	Consider the function~$f_S:\IN\rightarrow\IN$ defined, for~$n\in\IN$, as
	\[
		f_S(n)=
		\left\{\begin{array}{ll}
			i_0	& \mbox{if } (i_0,n)\in S,\\
			0 & \mbox{otherwise}.
		\end{array}\right.
	\]	
	If~$f_S(n)\notin O(1)$ then~$f_S(n)\in O(n)\setminus o(n)$.\footnote{%
		We point out a subtle difference between~$f(n)\in\Theta(n)$ and~$f(n)\in O(n)\setminus o(n)$, for a function~$f:\IN\rightarrow\IN$.
		Besides~$f(n)\in O(n)$, in the first case we require that~$f(n)\geq c\cdot n+d$ for \emph{each sufficiently large} integer~$n$
		and some constanst~$c,d$; in the second case, it is enough that~$f(n)\geq c\cdot n+d$ \emph{for infinitely many} integers~$n$.
	}
\end{lemma}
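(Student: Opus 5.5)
We write $S$ as a finite union of linear sets $S=L_1\cup\cdots\cup L_r$, where each
\[
L_j=\{{\mathbf x}_{j,0}+h_1{\mathbf x}_{j,1}+\cdots+h_{k_j}{\mathbf x}_{j,k_j}\mid h_1,\ldots,h_{k_j}\in\IN\}
\]
and ${\mathbf x}_{j,i}=(a_{j,i},b_{j,i})\in\IN^2$. The first coordinate plays the role of the value $f_S(n)$ and the second the role of the argument $n$. The plan is to prove the upper bound $O(n)$ and the lower bound (failure of $o(n)$) separately, both by inspecting the periods ${\mathbf x}_{j,i}$ with $i\geq1$.

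\textbf{Step 1: periods with $b=0$ are zero.} Fix $j$ and a period ${\mathbf x}_{j,i}=(a,0)$ with $i\geq1$ and $a>0$. Then $L_j$, and hence $S$, would contain both ${\mathbf x}_{j,0}$ and ${\mathbf x}_{j,0}+{\mathbf x}_{j,i}$. These two vectors have the same second component but different first components. This contradicts the hypothesis that for each $n$ there is at most one $i_0$ with $(i_0,n)\in S$. So every nonzero period has $b_{j,i}\geq1$, and zero periods can simply be discarded.

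\textbf{Step 2: the upper bound.} Let $c=\max\{a_{j,i}/b_{j,i}\}$ over all nonzero periods, and let $d=\max_j a_{j,0}$. Take any $(i_0,n)\in L_j$, so $(i_0,n)={\mathbf x}_{j,0}+\sum_i h_i{\mathbf x}_{j,i}$. Then
\[
i_0 = a_{j,0}+\sum_i h_i a_{j,i} \leq d + c\sum_i h_i b_{j,i} \leq d + c\,n .
\]
Since $f_S(n)$ is either such an $i_0$ or $0$, we get $f_S(n)\in O(n)$.

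\textbf{Step 3: the lower bound.} Since $f_S\notin O(1)$, the first coordinates occurring in $S$ are unbounded. So some $L_j$ has a period $(a,b)$ with $a>0$; otherwise every $L_j$ would have constant first coordinate $a_{j,0}$. By Step 1, $b\geq1$. Consider the points ${\mathbf x}_{j,0}+h(a,b)$ for $h\in\IN$. They all lie in $S$, so by uniqueness
\[
f_S(b_{j,0}+hb)=a_{j,0}+ha \quad\text{for every } h.
\]
Put $n_h=b_{j,0}+hb$. Then $f_S(n_h)/n_h\to a/b>0$ as $h\to\infty$. This gives infinitely many $n$ with $f_S(n)\geq c' n$ for some $c'>0$, so $f_S\notin o(n)$.

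\textbf{Main obstacle.} The only delicate point is Step 1. The uniqueness hypothesis is exactly what excludes periods of the form $(a,0)$ with $a>0$. Such periods would make $f_S$ ill-defined, and they are the only thing that could break either bound. Once that is settled, the remaining steps are routine.
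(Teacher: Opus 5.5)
Your proof is correct and follows the paper's own route: find a period $(a,b)$ with $a,b>0$ in a linear component of $S$, then vary only its coefficient so that $f_S$ is linear along an arithmetic progression of arguments. You also supply details the paper leaves implicit or only sketches. These are that the uniqueness hypothesis rules out periods $(a,0)$ with $a>0$, that the linear component must be chosen with unbounded first coordinates (being merely infinite, as the paper says, is not enough), and an explicit $O(n)$ bound via the maximal ratio $a_{j,i}/b_{j,i}$.
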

\begin{proof}
	Let us suppose that~$f_S(n)$ is not bounded by any constants. First we prove that~$f_S(n)\notin o(n)$.
	Let us consider the case of a linear set~$S =\{{\mathbf x}_0+h_1{\mathbf x}_1+\cdots+h_k{\mathbf x}_k \mid h_1,\ldots,h_k\in\IN \}$, 
	where~${\mathbf x}_i=(x'_i,x''_i)$, for~$i=0,\ldots,k$.
	
	Since~$f_S(n)\notin O(1)$, there exists an~$i$, $1\leq i\leq k$, such that~$x'_i\neq 0$ and~$x''_i\neq 0$.
	By fixing~$h_j=0$, for~$j=1,\ldots,k$ with~$j\neq i$, for~$h_i\in\IN$ and~$n=x''_0+h_ix_i''$
	we obtain~$f_S(n)=x'_0+h_ix_i'$.
	By considering all values~$h_i\in\IN$, we obtain infinitely many integers~$n$ such that~$f_S(n)=x'_0+\frac{n-x''_0}{x''_i}\cdot x'_i$, 
	i.e., $f_S(n)$ it is linear in~$n$. This implies that~$f_S(n)\notin o(n)$.
	
	Now, let us suppose~$S=\bigcup_{i=1}^hS_i$ is semilinear, where~$S_1,S_2,\ldots,S_h$ are linear sets and~$h>0$.
	Since~$f_S(n)$ is not bounded by any constants, the cardinality of~$S$ cannot be finite. Hence, there exists a~$j$, $1\leq j\leq h$,
	such that the linear set~$S_j$, $1\leq j\leq h$, contains infinitely many pairs. 
	This allows to conclude that~$f_{S_j}(n)\notin o(n)$ and so~$f_S(n)\notin o(n)$.
	
	By refining this argument, we can also conclude that~$f_S(n)=O(n)$.
\qed
\end{proof}

We remind the reader that, given an alphabet~$\Sigma=\{a_1,a_2,\ldots,a_m\}$ of~$m$ symbols,
the \emph{Parikh map} $\psi: \Sigma^* \to \IN^m$ associates with each word $w \in \Sigma^*$ the vector
\[
\psi(w)=\left(|w|_{a_1}, |w|_{a_2}, \ldots, |w|_{a_m}\right)\,, 
\]
which counts the occurrences of each letter of~$\Sigma$ in~$w$. The vector~$\psi(w)$ is also called \emph{Parikh image} of $w$.
One can naturally generalize this map to each language~$L \subseteq \Sigma^*$, by defining
\[
\psi(L) = \{\psi(w) \mid w \in L\}\,.
\]
The set~$\psi(L)$ is called the \emph{Parikh image} of $L$.

A language is called \emph{semilinear} when its Parikh image is a semilinear set.

It is well known that the class of semilinear languages properly includes context-free languages and
that each semilinear set is the Parikh image of some regular language. 
This implies that for each semilinear language~$L$ there exists a regular language~$L'$ with the same Parikh image, i.e., $\psi(L)=\psi(L')$~\cite{Parikh1966}.

\medskip

In the case of a \emph{strictly letter-bounded language}~$L\subseteq a_1^*a_2^*\cdots a_m^*$, with $a_i\neq a_j$ for~$i\neq j$, there is a one-to-one correspondence between the strings in~$L$ and the vectors in~$\psi(L)$. Indeed, from the definition of Parikh image, it immediately follows
that for all integers~$i_1,i_2,\ldots,i_m\in\IN$ it holds that
\[
(i_1,i_2,\ldots,i_m) \in\psi(L)\mbox{ if and only if }a_1^{i_1}a_2^{i_2}\cdots a_m^{i_m}\in L\,.
\]
This also gives a one-to-one correspondence between Presburger formulas with~$m$ free variables and semilinear subsets of~$a_1^*a_2^*\cdots a_m^*$.
Given a formula~$\phi(x_1,x_2,\ldots,x_m)$, the corresponding language 
is
\[
L_\phi=\{a_1^{i_1}a_2^{i_2}\cdots a_m^{i_m}\mid\phi(i_1,i_2,\ldots,i_m)\mbox{ is true}\}\,.
\]
Notice that the language~$L_\phi$ is not necessarily a context-free language,
for instance~$L_{(x_1=x_2)\wedge(x_2=x_3)}=\{a_1^na_2^na_3^n\mid n\geq 0\}$.

\section{Decidability results for bounded languages}
\label{sec:decBounded}

The problem of deciding whether a \pda\ accepts in a number of turns bounded by some constant, under the \weak\ measure, is undecidable.
The problem remains undecidable when the constant is any fixed integer~$k>0$~\cite{Pig25}.

In this section we prove that in the restricted case of word-bounded languages these problems are decidable.
Furthermore, we show that the number of turns used to accept a string, when nonconstant, should linearly grow with respect to the length of the string.

In order to prove these results, we present a construction to modify any \pda\ in order to `count' in the input the number of turns.

\begin{lemma}
\label{lemma:countTurns}
	For each \pda~$\ma{M}$ with input alphabet~$\Sigma$, there exists a \pda~$\ma{M}_0$ with input alphabet~$\Sigma\cup\{a_0\}$, where~$a_0\notin\Sigma$,
	such that a string~$w\in(\Sigma\cup\{a_0\})^*$ is accepted by~$\ma{M}_0$ if and only if the string~$\pi_\Sigma(w)$ is accepted by a computation of~$\ma{M}$
	that makes exactly~$|w|_{a_0}$ turns.
	Furthermore, each computation of~$\ma{M}_0$ accepting an input~$w$ makes exactly~$|w|_{a_0}$ turns.
\end{lemma}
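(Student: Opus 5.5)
Build $\ma{M}_0$ by direct simulation of $\ma{M}$, with a finite-control flag recording whether the pushdown is currently in an increasing or a decreasing phase. Each time a turn is completed, $\ma{M}_0$ must read one occurrence of $a_0$. The states of $\ma{M}_0$ are pairs $(q,d)$ with $q\in Q$ and $d\in\{\uparrow,\downarrow\}$. Here $d=\uparrow$ means that the last move which changed the pushdown height was a push, or that no height-changing move has occurred yet; $d=\downarrow$ means that it was a pop. The initial state is $(q_0,\uparrow)$. Since a turn is a push followed by a pop, with only height-preserving moves in between, turns correspond exactly to the switches of $d$ from $\uparrow$ to $\downarrow$.

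Classify each transition $(p,\alpha)\in\delta(q,\sigma,A)$ by $|\alpha|$: push if $|\alpha|\ge2$, stay if $|\alpha|=1$, pop if $\alpha=\varepsilon$. The transitions of $\ma{M}_0$ are defined as follows.
\begin{enumerate}
\item A push or stay move from $(q,d)$ leads to $(p,\uparrow)$ or $(p,d)$ respectively, with the same input symbol $\sigma$ and the same pushdown action.
\item A pop move from $(q,\downarrow)$ leads to $(p,\downarrow)$.
\item A pop move from $(q,\uparrow)$ cannot be executed directly. $\ma{M}_0$ must first read one $a_0$ by an extra transition $(q,\uparrow)\xrightarrow{a_0,A/A}(\hat q,\downarrow)$, where $\hat q$ is a fresh copy of $q$ from which only pop moves of $\ma{M}$ out of $q$ are allowed; these lead to $(p,\downarrow)$.
\end{enumerate}
The intermediate move on $a_0$ replaces $A$ by $A$, so it preserves the height and cannot create or destroy turns of $\ma{M}_0$ itself. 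There is one subtlety: a pop from $(q,\uparrow)$ with no earlier push (the very first height change being a pop) is not a turn. To handle it, use a third flag value $d=0$ for ``no height change yet''. From $d=0$, a pop goes directly to $\downarrow$ without reading $a_0$, and a push goes to $\uparrow$. Symbols $a_0$ are readable only in the step of item 3. Acceptance remains by empty store.

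Correctness then reduces to a bijection between computations. Every computation of $\ma{M}_0$ projects, by deleting the $a_0$-moves and forgetting the flags and hats, onto a computation of $\ma{M}$ on $\pi_\Sigma(w)$. It accepts iff the projection accepts, since the pushdown contents coincide. Conversely, every computation of $\ma{M}$ lifts uniquely, because the flags are deterministic functions of the history and the $a_0$-moves are forced exactly at the $\uparrow\to\downarrow$ switches. By the invariant on $d$, the number of $a_0$'s read equals the number of turns of the projected computation. This yields both directions of the equivalence, with $|w|_{a_0}$ matched by the number of turns of $\ma{M}$.

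For the final claim, a turn of $\ma{M}_0$ is a push, then height-preserving moves, then a pop. The $a_0$-moves are height-preserving, so the turns of $\ma{M}_0$ correspond exactly to those of the projection, and their number is $|w|_{a_0}$. The main obstacle is bookkeeping the invariant on $d$ carefully: stay moves must not change the flag, the initial case needs the value $0$, and the hatted states must only allow pops, so that $a_0$ cannot be read spuriously and later followed by a push. I would verify the invariant ``flag $=\uparrow$ iff the last height change was a push'' by induction on the length of the computation.
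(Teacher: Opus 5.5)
This is essentially the paper's construction: your flag values $0,\uparrow,\downarrow$ play the role of its states in $Q$, $\Qincr{Q}$ and $\Qdecr{Q}$. Your height-preserving $a_0$-step into a pop-only copy $\hat q$ replaces the paper's device of reading $a_0$ on the popping move itself when it is an $\varepsilon$-move, and otherwise splitting it into a $\sigma$-reading stay move into $\Qstay{q}$ followed by a pop on $a_0$; the only difference is whether $a_0$ is read before or after $\sigma$. As in the paper, your lifting argument proves the ``if'' direction only for the interleaving that places each $a_0$ at its turn, not for every $w$ with the given projection and $a_0$-count (no \pda\ can achieve that in general: for a machine accepting each $a^nb^n$ with exactly $n$ turns it would make $\{a^nb^na_0^n\mid n\geq 0\}$ context-free), but this existential form is what the later Parikh-image arguments use.
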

\begin{proof}
	First, we modify the given \pda~$\ma{M}$ in order to obtain an equivalent \pda~$\ma{M'}$ that keeps track in its finite control of the current `phase' 
	of the pushdown store, namely if it is increasing or decreasing.
	For this purpose, the set of states is the union of the set of states~$Q$ of~$\ma{M}$, with two marked copies of it denoted
	as~$\Qincr{Q}$ and~$\Qdecr{Q}$, used to remember if the machine is in an increasing or in a decreasing phase, respectively.
	The states in these copies will be indicated as~$\Qincr{q}$ and~$\Qdecr{q}$.
	The moves of~$\ma{M'}$ are defined from that of~$\ma{M}$ by updating the markers in the states as listed below,
	where~$p,q\in Q$, $A\in\Gamma$, $\sigma\in\Sigma\cup\{\varepsilon\}$:
	\begin{itemize}
		\item\emph{Increasing moves.}\\
		For~$(q,\alpha)\in\delta(p,\sigma,A)$, with~$\alpha\in\Gamma^*$ and~$|\alpha|>1$,
		the machine~$\ma{M'}$ has a transition setting the marker to~$\Qincr{}$, regardless the starting state is marked or not, 
		i.e., $\ma{M'}$ has the following transitions:
		\begin{itemize}
		\item $(\Qincr{q},\alpha)\in\delta'(p,\sigma,A)$,
		\item $(\Qincr{q},\alpha)\in\delta'(\Qincr{p},\sigma,A)$,
		\item $(\Qincr{q},\alpha)\in\delta'(\Qdecr{p},\sigma,A)$.
		\end{itemize}
		This allows to remember that the machine is in an increasing phase.
		\item\emph{Decreasing moves}.\\
		For~$(q,\varepsilon)\in\delta(p,\sigma,A)$, if the starting state is not marked, then the ending state is not marked, 
		otherwise it will be marked by~$\Qdecr{}$, regardless the marker in the starting state, i.e., $\ma{M'}$ has the following transitions:
		\begin{itemize}
		\item $(q,\varepsilon)\in\delta'(p,\sigma,A)$, 
		\item $(\Qdecr{q},\varepsilon)\in\delta'(\Qincr{p},\sigma,A)$, 
		\item $(\Qdecr{q},\varepsilon)\in\delta'(\Qdecr{p},\sigma,A)$.
		\end{itemize}
		In the first case the machine remembers that, up to now, no increasing (and so no decreasing) phase has been reached. 
		In the other cases, the machine remembers that the current phase is decreasing.
		\item\emph{Remaining moves}\\
		For~$(q,B)\in\delta(p,\sigma,A)$, with~$B\in\Gamma$, the marker, if any, is not changed, 
		i.e., $\ma{M'}$ has the following transitions:
		\begin{itemize}
		\item $(q,B)\in\delta'(p,\sigma,A)$,
		\item $(\Qincr{q},B)\in\delta'(\Qincr{p},\sigma,A)$,
		\item $(\Qdecr{q},B)\in\delta'(\Qdecr{p},\sigma,A)$.
		\end{itemize}
	\end{itemize}
	The machine~$\ma{M'}$ starts the computation in the initial state of~$\ma{M}$ and directly simulates the moves of~$\ma{M}$, 
	using states in~$Q$, until a move which increases the pushdown store is simulated.
	In this way, if the simulated computation of~$\ma{M}$ does not make any push and so any turn, then the corresponding computation 
	of~$\ma{M'}$ uses only original states from~$Q$.
	The first time a move increasing the store is simulated, 
	the machine starts to use the marked states, keeping track in the marker that the phase increasing. 
	This information is changed when a pop is executed, entering a decreasing phase, and so on.
	
	The \pda~$\ma{M}_0$ claimed in the statement of the lemma can be obtained by modifying~$\ma{M'}$ in such a way that~$\ma{M}_0$
	simulates all the moves of~$\ma{M'}$, including those that read input symbols, but it reads an extra
	input symbol~$a_0\notin\Sigma$ each time a transition from a state in~$\Qincr{Q}$ to a state in~$\Qdecr{Q}$ is executed.
	To this aim, each transition of the form~$(\Qdecr{q},\varepsilon)\in\delta'(\Qincr{p},\varepsilon,A)$, is replaced by the 
	transition~$(\Qdecr{q},\varepsilon)\in\delta_0(\Qincr{p},a_0,A)$, while each 
	transition~$(\Qdecr{q},\varepsilon)\in\delta'(\Qincr{p},\sigma,A)$ with~$\sigma\neq\varepsilon$ is replaced by the two
	transitions~$(\Qstay{q},A)\in\delta_0(\Qincr{p},\sigma,A)$ and~$(\Qdecr{q},\varepsilon)\in\delta_0(\Qstay{q},a_0,A)$, 
	where the state~$\Qstay{q}$ 
	is used to remember that~$\ma{M}_0$ has to read the symbol~$a_0$ and complete the simulation of a transition to~$\Qdecr{q}$ of~$\ma{M'}$.
	In this way, the strings in the language~$L_0\subseteq(\{a_0\}\cup\Sigma)^*$ accepted by~$\ma{M}_0$ have the property claimed in the statement of 
	the lemma.
	
	We point out that the computations of~$\ma{M}_0$ directly simulate those of the original \pda~$\ma{M}$, by making the same moves on the stack.
	So they use exactly the same number of turns as original computations. This number corresponds to the occurrences of the symbol~$a_0$ in~$w$.
\qed
\end{proof}

We are now ready to prove that it is possible to decide whether a \pda\ accepts a strictly letter-bounded language in a finite number of turns and whether it accepts a strictly letter-bounded language
in~$k$ turns, for any given integer~$k$.
This is in contrast with the general case, in which these properties are undecidable, even for \ocas, with the only exception of acceptance in~$0$ turns.

\begin{theorem}
\label{theorem:decLetterBounded}
	The following properties are decidable:
	\begin{itemize}
	\item whether a \pda\ accepts a strictly letter-bounded language in a finite number of turns,
	\item whether a \pda\ accepts a strictly letter-bounded language in~$k$ turns, for
	a given~$k\geq 0$.	
	\end{itemize}
\end{theorem}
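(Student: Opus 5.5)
Given a \pda~$\ma{M}$ accepting $L\subseteq a_1^*a_2^*\cdots a_m^*$, I will apply \cref{lemma:countTurns} to obtain $\ma{M}_0$ over $\{a_0,a_1,\ldots,a_m\}$. By that lemma, $w\in L(\ma{M}_0)$ if and only if $\pi_\Sigma(w)$ is accepted by $\ma{M}$ with exactly $|w|_{a_0}$ turns. Since $L(\ma{M}_0)$ is context-free, Parikh's theorem says its Parikh image
\[
S=\psi(L(\ma{M}_0))\subseteq\IN^{1+m}
\]
is semilinear. The semilinear representation can be computed effectively from $\ma{M}_0$, e.g., via an equivalent grammar. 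Because $L$ is strictly letter-bounded, $\pi_\Sigma(w)$ is determined by its Parikh vector. Hence $(i_0,i_1,\ldots,i_m)\in S$ holds exactly when $a_1^{i_1}\cdots a_m^{i_m}$ is accepted by $\ma{M}$ through some computation with exactly $i_0$ turns.

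Next I apply \cref{lemma:minSemilinear}(1) to $S$. This gives the semilinear set $S_{\min}$ and an explicit Presburger formula $\phi_{\min}(x_0,x_1,\ldots,x_m)$ for it. Here $(i_0,i_1,\ldots,i_m)\in S_{\min}$ means that $i_0$ is the minimum number of turns with which $a_1^{i_1}\cdots a_m^{i_m}$ is accepted, which is exactly the \weak\ cost of that string. Let $\phi(x_0,\ldots,x_m)$ be a Presburger formula for $S$. Both properties can then be written as Presburger sentences:
\[
\forall x_0\forall x_1\cdots\forall x_m\,\bigl(\phi_{\min}(x_0,x_1,\ldots,x_m)\rightarrow x_0\leq k\bigr),
\]
where $k$ is written as $1+1+\cdots+1$, expresses acceptance in $k$ turns. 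Acceptance in a finite number of turns is expressed by
\[
\exists c\,\forall x_0\forall x_1\cdots\forall x_m\,\bigl(\phi_{\min}(x_0,x_1,\ldots,x_m)\rightarrow x_0\leq c\bigr).
\]
Both sentences are decidable by the decidability of Presburger arithmetic~\cite{HB34}, which proves the theorem.

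The main obstacle is \textbf{effectiveness}. \cref{lemma:minSemilinear} and the Ginsburg--Spanier characterization are stated as existence results. The decision procedure needs the semilinear description of $\psi(L(\ma{M}_0))$ to be computable from $\ma{M}$; I would justify this through the constructive proof of Parikh's theorem. It also needs the translation between semilinear sets and Presburger formulas to be effective; this holds in the direction used here because a linear set is directly definable by an existential formula.

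A second point to check is that strict letter-boundedness is essential for reading $S_{\min}$ as the \weak\ cost per string. Without it, one Parikh vector could come from several strings with different minimal turn counts.
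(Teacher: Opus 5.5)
Your proposal is correct and is essentially the paper's proof: you count turns with the extra symbol $a_0$ (Lemma~\ref{lemma:countTurns}), pass to the semilinear set $\psi(L_0)_{\min}$ (Lemma~\ref{lemma:minSemilinear}), and decide the same Presburger sentences $\phi(k)$ and $\exists k\,\phi(k)$, while your remarks on effectiveness make explicit what the paper leaves implicit. The only step the paper adds is a preliminary test that $L$ really is strictly letter-bounded, done by checking $L\cap(a_1^*a_2^*\cdots a_m^*)^c=\emptyset$ for each permutation $a_1,\ldots,a_m$ of $\Sigma$; you take this as a hypothesis, but it is needed if the property is read as ``$\ma{M}$ accepts a strictly letter-bounded language and does so in finitely many (or $k$) turns''.
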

\begin{proof}
 	Let~$\ma{M}$ be a \pda\ with input alphabet~$\Sigma$ of~$m$ symbols, accepting a language~$L\subseteq\Sigma^*$.
	
	First, we observe that, fixed a permutation~$a_1,a_2,\ldots,a_m$ of the symbols in~$\Sigma$, $L\subseteq a_1^*a_2^*\cdots a_m^*$ 
	if and only if~$L\cap(a_1^*a_2^*\cdots a_m^*)^c=\emptyset$.
	Since context-free languages are effectively closed under intersection with regular languages and emptiness  is
	decidable for context-free languages, then it is also decidable whether~$L\subseteq a_1^*a_2^*\cdots a_m^*$.
	To decide whether~$L$ is strictly-letter bounded, it is enough to test each permutation~$a_1,a_2,\ldots,a_m$,
	of the symbols in~$\Sigma$ up to find the one satisfying~$L\subseteq a_1^*a_2^*\cdots a_m^*$, if any.
	
	Once it is known that~$L\subseteq a_1^*a_2^*\cdots a_m^*$, let us consider the \pda~$\ma{M}_0$ obtained from~$\ma{M}$
	according to  Lemma~\ref{lemma:countTurns}, accepting a language~$L_0$. 
	Hence, a string~$a_1^{i_1}a_2^{i_2}\cdots a_m^{i_m}\in L$ is accepted by a computation of~$\ma{M}$ which makes~$k$ turns if and only 
	if~$(k,i_1,i_2,\ldots,i_m)\in\psi(L_0)$, where~$\psi(L_0)$ denotes the Parikh image of~$L_0$ which is semilinear.
	We observe that the smallest number of turns used to accept~$a_1^{i_1}a_2^{i_2}\cdots a_m^{i_m}$ is the smallest~$k$
	such that~$(k,i_1,i_2,\ldots,i_m)\in\psi(L_0)$.
	This leads to consider the following set:
	\begin{eqnarray}
	\psi(L_0)_{\min}=\{(i_0,i_1,\ldots,i_m)\in\psi(L_0) \mid \mbox{if $(k,i_1,\ldots,i_m)\in\psi(L_0)$ then $i_0\leq k$}\}\,
	\label{eq:psiMin}
	\end{eqnarray}
	which, according to Lemma~\ref{lemma:minSemilinear}, is semilinear and, hence,
	can be expressed by a formula~$\phi_{\min}(x_0,x_1,\ldots,x_n)$ in the Presburger Arithmetic.
	
	Let us now consider the following sentence:
	\[
	\phi(x) = \forall i_0\forall i_1\ldots\forall i_m(\phi_{\min}(i_0,i_1,\ldots,i_m) \rightarrow (i_0\leq x))\,.
	\]
	We observe that~$\ma{M}$ accepts in~$k$ turns, for any fixed~$k\geq 0$, if and only if the sentence~$\phi_k$ 
	obtained by replacing in~$\phi$ the variable~$x$ by the constant~$k$, i.e., $\phi_k=\phi(k)$, is true.	
	Furthermore,~$\ma{M}$ accepts in a finite number of turns if and only is the sentence~$\phi_{fin}=\exists k\,\phi(k)$ is true.
	Since Presburger Arithmetic is decidable, $\phi_k$ and~$\phi_{fin}$ are decidable.
	Hence, we conclude that the two properties are decidable.
\qed
\end{proof}

\noindent
We now extend Theorem~\ref{theorem:decLetterBounded} to the case of \pdas\ accepting word-bounded languages,
by reducing the word-bounded case to the strictly letter-bounded case, solved in Theorem~\ref{theorem:decLetterBounded}.
This is done by making use of an inverse homomorphism.
To this aim, let us prove the following result:

\begin{lemma}
	\label{lemma:inverseHomomorphism}
	Given an alphabet~$\Sigma$, let~$w_1,w_2,\ldots,w_m\in\Sigma^*$ be fixed strings.
	Consider an~$m$-symbol alphabet~$\Delta=\{a_1,a_2,\ldots,a_m\}$ and the 
	morphism~$h:\Delta\rightarrow\Sigma^*$ defined as~$h(a_i)=w_i$, $i=1,\ldots,m$.
	Given~$L\subseteq w_1^*w_2^*\cdots w_m^*$, let~$L_h=h^{-1}(L)\cap(a_1^*a_2^*\cdots a_m^*)$.
	Then:
	\begin{itemize}
	\item $L_h=\{a_1^{i_1}a_2^{i_2}\cdots a_m^{i_m} \mid w_1^{i_1}w_2^{i_2}\cdots w_m^{i_m}\in L\}$\,.
	\item Given a \pda~$\ma{M}$ accepting the language~$L$, we can construct a \pda~$\ma{M}_h$ such that there
	is a one-to-one correspondence between the computations of~$\ma{M}_h$ on each string~$x\in a_1^*a_2^*\cdots a_m^*$ and
	the computations of~$\ma{M}$ on the string~$h(x)$, in such a way that corresponding computations
	use the same number of turns, i.e., for integer~$k\geq 0$,~$\ma{M}_h$ has a computation on~$x$ that makes~$k$ turns
	if and only if~$\ma{M}$ has a computation on~$h(x)$ that makes~$k$ turns.
	\item For each~$x\in L_h$ and for each~$k\geq 0$,
	$\ma{M}_h$ acceptes~$x$ in~$k$ turns if and only if~$\ma{M}$ accepts~$h(x)$ is~$k$ turns.
	\end{itemize}
\end{lemma}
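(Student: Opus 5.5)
The plan is to handle the three items in order, using the standard inverse-homomorphism construction for \pdas\ (as in~\cite{HU79}) and checking that it neither adds nor removes turns.

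\emph{First item.} This is direct from the definitions. A string $x\in a_1^*a_2^*\cdots a_m^*$ has the form $a_1^{i_1}\cdots a_m^{i_m}$. Its image is $h(x)=w_1^{i_1}\cdots w_m^{i_m}$. By definition of $h^{-1}$, the string $x$ belongs to $L_h$ exactly when $h(x)\in L$.

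\emph{Second item.} I would build $\ma{M}_h$ with a buffer kept in the finite control. Its states are pairs $(q,u)$, where $q$ is a state of $\ma{M}$ and $u$ is a suffix of some $w_i$. The construction works as follows.
\begin{itemize}
\item When $u=\varepsilon$, $\ma{M}_h$ may read an input symbol $a_i$ and move to $(q,w_i)$ without touching the stack. This is a ``remaining move'' that replaces the top symbol $A$ by $A$.
\item When $u=\sigma u'$, $\ma{M}_h$ simulates a move of $\ma{M}$ on $\sigma$, without consuming input, and goes to $(p,u')$.
\item In any state, $\ma{M}_h$ may simulate an $\varepsilon$-move of $\ma{M}$ and keep the buffer unchanged.
\end{itemize}
The initial state is $(q_0,\varepsilon)$, and acceptance is by empty store, just as in $\ma{M}$.

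The buffer-loading moves leave the stack height unchanged, and every other move performs exactly the stack operation of the simulated move of $\ma{M}$. So the sequence of stack heights of a computation of $\ma{M}_h$ is that of the simulated computation of $\ma{M}$, with some repeated entries inserted. A turn is a strict increase, followed by a stretch of constant height, followed by a strict decrease. Inserting constant-height steps therefore does not change the number of turns.

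\emph{Main obstacle: getting a genuine one-to-one correspondence.} The difficulty is where buffer loadings are placed relative to $\varepsilon$-moves. The same computation of $\ma{M}$ could be matched by several computations of $\ma{M}_h$ that differ only in when the next $a_i$ is loaded. I would fix this by a normal form: a new block $a_i$ is loaded only when the next simulated move of $\ma{M}$ reads an input symbol. To enforce this, I would make loading and the first simulated symbol move happen in a single combined step: read $a_i$ and simultaneously simulate a move of $\ma{M}$ on the first symbol of $w_i$. This requires $w_i\neq\varepsilon$. Empty words $w_i$ I would treat separately: $h(a_i)=\varepsilon$ lets arbitrarily many $a_i$ be read with no effect, so bijectivity must then be weakened to a correspondence that preserves turn counts. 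Alternatively, one can note that such $w_i$ can be dropped from the bounding expression.

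With this normal form, each computation of $\ma{M}$ on $h(x)$ is decomposed uniquely according to the factorization of $h(x)$ into blocks $w_{i}$ given by $x$. That gives the bijection on $x\in a_1^*\cdots a_m^*$. It maps accepting computations to accepting computations, because input exhaustion corresponds (the buffer is empty at the end) and empty store is preserved.

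\emph{Third item.} Restrict the correspondence of the second item to accepting computations. Then the minimum number of turns over accepting computations of $\ma{M}_h$ on $x$ equals the minimum over accepting computations of $\ma{M}$ on $h(x)$. Hence $\ma{M}_h$ accepts $x$ in $k$ turns if and only if $\ma{M}$ accepts $h(x)$ in $k$ turns. Since the statement concerns $x\in L_h$, we may intersect with $a_1^*\cdots a_m^*$ via a product with a finite automaton; this adds only non-stack moves and so preserves turns.
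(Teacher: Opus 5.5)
Your construction is the paper's: the Hopcroft--Ullman inverse-homomorphism simulation, with the buffer kept in the finite control, and turns preserved because the pushdown of $\ma{M}$ is mirrored move by move. Your normal form, which merges the loading of $a_i$ with the first symbol move on $w_i$, makes precise the one-to-one correspondence that the paper's sketch only asserts. There is, however, a concrete gap in the acceptance condition.

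You let $\ma{M}_h$ accept by empty store ``just as in $\ma{M}$'' and then claim the buffer is empty at the end, but nothing in your construction enforces this. The simulated computation of $\ma{M}$ may empty its pushdown in the middle of a block, after consuming only a proper prefix of $h(x)$. If this happens after the last symbol of $x$ has been read, $\ma{M}_h$ has scanned all of $x$ with an empty store and accepts. Yet the corresponding computation of $\ma{M}$ is not an accepting computation on $h(x)$. So your bijection sends accepting computations of $\ma{M}$ to accepting ones, but not conversely, and the ``only if'' direction of the third item fails.

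For example, let $w_1=b$ and $w_2=bc$. Let $\ma{M}$ accept $b$ by popping $Z_0$ while reading $b$ (no turns). Let it accept $bc$ only by pushing $A$ on $b$, popping $A$ on $c$, and then popping $Z_0$ (one turn). Then $L=\{b,bc\}\subseteq w_1^*w_2^*$ and $a_2\in L_h$. Your $\ma{M}_h$ accepts $a_2$ in $0$ turns: it reads $a_2$, simulates the popping move on $b$, and leaves $c$ in the buffer. But $\ma{M}$ needs one turn on $h(a_2)=bc$. If you keep only the popping transition, so that $L=\{b\}$, then $\ma{M}_h$ even accepts $a_2\notin L_h$.

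The repair is to allow a move that empties the store only if the buffer is empty after that move. For this, $\ma{M}_h$ must recognize the bottommost pushdown symbol. Use a tagged copy of $\Gamma$ and start from the tagged $Z_0$. Whenever the tagged symbol is replaced by some $\alpha\neq\varepsilon$, tag the last symbol of $\alpha$. This changes no stack height, so your turn argument is untouched.

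Avoid the tempting alternative of inserting a fresh bottom marker below $Z_0$. The initial push it requires creates a spurious turn on accepting computations of $\ma{M}$ that never push, which breaks the case $k=0$.

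The paper's sketch is also informal here: it only says $\ma{M}_h$ accepts or rejects ``according to the simulated computation of $\ma{M}$'', which implicitly requires all of $h(x)$ to be consumed. Your explicit choice of plain empty-store acceptance, however, makes the step false as written.
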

\begin{proof}
	We observe that from the definition of~$h$ it follows 
	that~$h(a_1^{i_1}a_2^{i_2}\cdots a_m^{i_m})=w_1^{i_1}w_2^{i_2}\cdots w_m^{i_m}$,
	for all~$i_1,i_2,\ldots,i_m\in\IN$. From this, we easily obtain the first statement.
	
	The construction of the \pda~$\ma{M}_h$, we now outline, is obtained by adapting the construction used in~\cite[Thm.~6.3]{HU79}
	to prove that the class of context-free languages is closed under inverse homomorphism.
	
	The \pda~$\ma{M}_h$ is obtained by replacing the input tape of~$\ma{M}$ by a finite state control~$\ma{A}$
	that reads a string in~$\Delta^*$ from the `new' input tape and by a finite buffer.
	The state control of~$\ma{A}$ and the buffer are embedded in the finite control of~$\ma{M}_h$
	together with the control of~$\ma{M}$.
	
	The component~$\ma{A}$ has to compute~$h(x)$, where~$x\in\Delta^*$ is the input string of~$\ma{M}_h$, 
	while verifying that~$x\in a_1^*a_2^*\cdots a_m^*$ (if this is not the case, then~$\ma{M}_h$ rejects).
	The string~$h(x)$ is used as input for~$\ma{M}$. The part of~$\ma{M}_h$ corresponding to~$\ma{M}$,
	directly simulates a computation of~$\ma{M}$ and makes exactly the same turns as the original
	computation on~$h(x)$ (this gives the one-to-one correspondence with original computations).
	The string~$h(x)$ is not saved by the machine, but the parts~$\ma{A}$ and~$\ma{M}$
	work together using a `producer/consumer' scheme: Each time~$\ma{A}$ reads a symbol~$a_i$ from the input,
	$1\leq i\leq m$, it saves the string $w_i$ in the buffer. The part corresponding to~$\ma{M}$ 
	reads symbols from the buffer, instead of from the input tape. When the buffer is empty, 
	$\ma{A}$ reads another input symbol and saves the corresponding string in the buffer to be used by~$\ma{M}$,
	unless there are no more input symbols for~$\ma{M}_h$. In this case, the computation stops
	accepting or rejecting, according to the simulated computation of~$\ma{M}$.
	
	Since along the simulation the pushdown contents of~$\ma{M}_h$ and of~$\ma{M}$ are the same, we can
	easily conclude that for each~$x\in L_h$ and for each~$k\geq 0$,
	$\ma{M}_h$ accepts~$x$ in~$k$ turns if and only if~$\ma{M}$ accepts~$h(x)$ in~$k$ turns.
\qed
\end{proof}

We point out that in the statement of Lemma~\ref{lemma:inverseHomomorphism} it is required that the
cardinality of~$\Delta$ is~$m$. This implies that~$a_i\neq a_j$, for~$i\neq j$, even in case~$w_i=w_j$,
namely the language~$L_h$ is
\emph{stricty} letter bounded, even if the words~$w_1,w_2,\ldots,w_m\in\Sigma^*$
are not pairwise different.

We also observe that it could happen to have different strings~$x,y\in L_h$ 
such that~$h(x)=h(y)=w$, for some~$w\in L$. This would imply that each accepting computation on~$w$
corresponds to an accepting computation on~$x$ and to an accepting computation on~$y$, 
even though~$x$ and~$y$ may even be of different lengths. 
Hence, the number of turns that are sufficient to~$\ma{M}_h$ to
accept the strings~$x$ and~$y$ are the same as that of~$\ma{M}$ to accept~$w$.

\smallskip

We are now able to extend Theorem~\ref{theorem:decLetterBounded} to the word-bounded case:

\begin{theorem}
\label{theorem:decWordBounded}
	The following properties are decidable:
	\begin{itemize}
	\item whether a \pda\ accepts a word-bounded language in a finite number of turns,
	\item whether a \pda\ accepts a word-bounded language in~$k$ turns, for
	a given~$k\geq 0$.	
	\end{itemize}
\end{theorem}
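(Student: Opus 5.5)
We reduce to Theorem~\ref{theorem:decLetterBounded} through Lemma~\ref{lemma:inverseHomomorphism}. Let $\ma{M}$ be a \pda\ with input alphabet $\Sigma$ accepting $L$. We assume that words $w_1,\ldots,w_m$ with $L\subseteq w_1^*w_2^*\cdots w_m^*$ are given as part of the instance. If they are not given, we note that boundedness of a context-free language is decidable (Ginsburg--Spanier) and the witnessing words can be computed; we would quote this classical fact rather than reprove it. As in the proof of Theorem~\ref{theorem:decLetterBounded}, inclusion in $w_1^*\cdots w_m^*$ can be checked by testing emptiness of $L\cap(w_1^*\cdots w_m^*)^c$.

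Next we build the morphism $h(a_i)=w_i$ and the \pda\ $\ma{M}_h$ of Lemma~\ref{lemma:inverseHomomorphism}. By construction $\ma{M}_h$ rejects every input outside $a_1^*\cdots a_m^*$, so it accepts exactly $L_h$, which is strictly letter-bounded. The key claim is:
\begin{center}
$\ma{M}$ accepts $L$ in $k$ turns \quad if and only if \quad $\ma{M}_h$ accepts $L_h$ in $k$ turns,
\end{center}
and the same equivalence holds with ``in a finite number of turns'' on both sides.

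For the forward direction, take $x\in L_h$. Then $h(x)\in L$ by the first item of Lemma~\ref{lemma:inverseHomomorphism}, so $h(x)$ is accepted by $\ma{M}$ in at most $k$ turns. By the third item, $\ma{M}_h$ accepts $x$ in at most $k$ turns.

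For the converse, take $w\in L$. Since $L\subseteq w_1^*\cdots w_m^*$, we can write $w=w_1^{i_1}\cdots w_m^{i_m}$ for some exponents, and $x=a_1^{i_1}\cdots a_m^{i_m}$ lies in $L_h$ with $h(x)=w$. Because $\ma{M}_h$ accepts $x$ in at most $k$ turns, the correspondence of computations shows that $\ma{M}$ accepts $w$ in at most $k$ turns.

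The finite-turn case follows by quantifying over $k$. The point to be careful about is that Definition~\ref{def:turns} measures turns as a function of the input length. ``Finite'' means a uniform constant bound over all strings, and both directions above preserve the same constant $k$. Hence lengths play no role here, even though $|x|$ and $|h(x)|$ may differ and several $x$ may map to the same $w$, as noted after the lemma.

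Finally we apply the decision procedure of Theorem~\ref{theorem:decLetterBounded} to $\ma{M}_h$, whose language is already known to be contained in $a_1^*\cdots a_m^*$, so no permutation search is needed. The main obstacle is obtaining the words $w_i$ when they are not supplied, together with making sure $\ma{M}_h$ really rejects inputs outside $a_1^*\cdots a_m^*$. The latter is built into the construction of Lemma~\ref{lemma:inverseHomomorphism}; alternatively, one can intersect with that regular language without altering turns, by taking the product with a finite automaton in the control.
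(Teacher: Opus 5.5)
Your proof is correct and follows the same route as the paper: compute the words $w_1,\ldots,w_m$ via the Ginsburg--Spanier result, build $\ma{M}_h$ by Lemma~\ref{lemma:inverseHomomorphism}, and apply Theorem~\ref{theorem:decLetterBounded} to $\ma{M}_h$. Your two-direction argument that $\ma{M}$ accepts $L$ in $k$ turns exactly when $\ma{M}_h$ accepts $L_h$ in $k$ turns is a useful addition, since the paper leaves this step implicit. It keeps the same constant $k$ in both directions, so the differing lengths and the non-injectivity of $h$ do not matter.
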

\begin{proof}
	Let~$\ma{M}$ be a \pda\ with input alphabet~$\Sigma$. 
	First of all, it can be decided if the language~$L$ accepted by~$\ma{M}$ is
	word bounded. Furthermore, in this case, strings~$w_1,w_2,\ldots,w_m\in\Sigma^*$, such that~$L\subseteq w_1^*w_2^*\cdots w_m^*$ can
	be effectively found~\cite[Thm.~5.2]{GS64}.
	
	According to Lemma~\ref{lemma:inverseHomomorphism}, from~$\ma{M}$ we can construct a \pda~$\ma{M}_h$
	accepting a strictly letter-bounded language~$L_h=h^{-1}(L)\cap(a_1^*a_2^*\cdots a_m^*)$, where~$\Delta=\{a_1,\ldots,a_m\}$ is
	a~$m$-symbol alphabet and~$h(a_i)=w_i$, $i=1,\ldots,m$, such that for each~$x\in L_h$ and~$k\geq 0$,
	$\ma{M}_h$ accepts~$x$ in~$k$ turns if and only if~$\ma{M}$ accepts~$h(x)$ is~$k$ turns.
	Hence, the result follows from Theorem~\ref{theorem:decLetterBounded}.
\qed
\end{proof}

We point out that the decidability questions considered in Theorem~\ref{theorem:decWordBounded} concern the machines, 
not the accepted languages. Actually, it is well-known that each word-bounded language can be accepted by a \pda\ making 
a number of turns bounded by a constant~\cite{GS66}.
In particular, if the accepted language is a subset of~$w_1^*w_2^*\cdots w_m^*$, then~$m-1$ turns are sufficient.
However, we could have \pdas\ accepting the same language using an higher, even unbounded, number of turns.
For results on \pdas\ accepting bounded languages and the reduction of the number
of turns, we address the reader to~\cite{MP13}. 

\medskip

We are now going to prove that if the number of turns used by a \pda\ to accept a word-bounded language 
is not limited by any constants, then it should grow at least as a linear function:
	
\begin{theorem}
\label{th:lowerBound}
	Let~$\ma{M}$ a \pda\ accepting a word-bounded language~$L$ in~$t(n)$ turns (\weak\ measure).
	If~$t(n)\notin O(1)$ then~$t(n)\in O(n)\setminus o(n)$.
\end{theorem}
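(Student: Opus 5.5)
We interpret $t(n)$ as the optimal \weak\ cost function: for each $n$, $t(n)$ is the maximum, over the strings $w\in L$ with $|w|=n$, of the minimum number of turns in an accepting computation of $\ma{M}$ on $w$. Any admissible bound dominates this function, so it suffices to prove the claim for it. The plan is to express $t$ as a function $f_S$ for a suitable semilinear set $S\subseteq\IN^2$ and then apply Lemma~\ref{lemma:linearGrow}. We first reduce to the strictly letter-bounded case. Using \cite[Thm.~5.2]{GS64}, we find $w_1,\ldots,w_m$ with $L\subseteq w_1^*\cdots w_m^*$. We then build $\ma{M}_h$ as in Lemma~\ref{lemma:inverseHomomorphism}, and from it the \pda~$\ma{M}_{h,0}$ given by Lemma~\ref{lemma:countTurns}, which accepts a language $L_0$. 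For $x=a_1^{i_1}\cdots a_m^{i_m}\in L_h$, the vector $(k,i_1,\ldots,i_m)$ belongs to $\psi(L_0)$ exactly when $\ma{M}$ has an accepting computation on $h(x)=w_1^{i_1}\cdots w_m^{i_m}$ making exactly $k$ turns. By Lemma~\ref{lemma:minSemilinear}(1), the set $\psi(L_0)_{\min}$ is semilinear, and it contains exactly one vector for each $x\in L_h$, namely the one whose first component is the minimal turn count of $\ma{M}$ on $h(x)$.

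Next we pass to input lengths. Applying Lemma~\ref{lemma:minSemilinear}(2) with $\alpha_j=|w_j|$ gives the semilinear set
\[
T=\{(i_0,\ |w_1|i_1+\cdots+|w_m|i_m)\mid (i_0,i_1,\ldots,i_m)\in\psi(L_0)_{\min}\},
\]
whose pairs are (minimal turns for $w$, $|w|$) for $w\in L$. Several vectors may map to the same length. This happens both because different words of $L$ can have equal length and because distinct $x,y\in L_h$ can satisfy $h(x)=h(y)$. So $T$ may contain several first components for the same $n$, and Lemma~\ref{lemma:linearGrow} needs at most one. To fix this we take the maximum over each length: $S=T_{\max}$ in the sense of Lemma~\ref{lemma:minSemilinear}(1), applied with $m=1$. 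Then $S$ is semilinear and $f_S(n)=t(n)$.

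The main obstacle is that $T_{\max}$ is well defined only when, for each $n$, finitely many $i_0$ occur with second component $n$. This holds because the set of words of length $n$ is finite and each word $w$ contributes a single value, its minimal turn count; strings $x\ne y$ with $h(x)=h(y)$ give the same value, as remarked after Lemma~\ref{lemma:inverseHomomorphism}. The Presburger definition of $T_{\max}$ then selects exactly the maximum for each $n$, so $S$ has the uniqueness property required by Lemma~\ref{lemma:linearGrow}. Since $t(n)\notin O(1)$, Lemma~\ref{lemma:linearGrow} gives $t(n)\in O(n)\setminus o(n)$. If the statement is read with $t$ an arbitrary admissible bound rather than the optimal one, the $o(n)$ part transfers because $t$ dominates the optimal function. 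The $O(n)$ part should instead be read as the existence of a linear bound on the optimal number of turns, which is what the argument yields.
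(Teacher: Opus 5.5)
Your proposal is correct and follows the paper's proof essentially step for step: reduce to the strictly letter-bounded case via $\ma{M}_h$, count turns with the extra symbol $a_0$ (Lemma~\ref{lemma:countTurns}), take $\psi(L_0)_{\min}$, weight by $|w_j|$ to get $T$, then pass to $T_{\max}$ and apply Lemma~\ref{lemma:linearGrow}. You also check explicitly that only finitely many first components occur for each length, so that $T_{\max}$ really picks out $t(n)$, which the paper leaves implicit.

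One small caveat concerns your side remarks that it ``suffices'' to treat the optimal function and that the $o(n)$ part transfers to an arbitrary admissible bound. These do not hold in general: an admissible bound can be unbounded (e.g.\ logarithmic) while the optimal \weak\ cost is $O(1)$. So the theorem has to be read, as the paper does, with $t$ the optimal \weak\ cost, and that is exactly what your main argument proves.
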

\begin{proof}
	According to Definition~\ref{def:turns}, let us consider the function
	\[
		t(n)=
		\left\{\begin{array}{ll}
			\max\{t(x)\mid|x|=n\mbox{ and }x\in L\}&\mbox{if $L$ contains at least one string of length~$n$},\\
			0 & \mbox{otherwise},
		\end{array}\right.
	\]	
	where, for~$x\in L$, $t(x)$ is the minimum number of turns made by~$\ma{M}$ in the computations that accept~$x$.	
	We point out that, since for each~$n\geq 0$ the number of strings of length~$n$ is finite, the value~$t(n)$ is always
	defined.
	To prove the result, we are going to show that the following set is semilinear:
	\[
		\{(t(n),n)\mid~L \mbox{ contains at least one string of length~$n$}\}\,.
	\]
	Thus, the result will follow from Lemma~\ref{lemma:linearGrow}.

	Let us start by supposing that the
	\pda~$\ma{M}$ accepts a \emph{strictly letter-bounded} language~$L\subseteq a_1^*a_2^*\cdots a_m^*$.
	
	By exploiting the same idea of the proof of Theoreom~\ref{theorem:decLetterBounded},
	from~$\ma{M}$ we build a \pda~$\ma{M}_0$ that, using an extra input symbol~$a_0$, 
	counts the number of turns made in accepting computations. Let~$L_0$ be the
	language accepted by~$\ma{M}_0$. Then, for any string~$a_1^{i_1}a_2^{i_2}\cdots a_m^{i_m}\in L$, 
	we have that~$t(a_1^{i_1}a_2^{i_2}\cdots a_m^{i_m})=\min\{i_0\mid(i_0,i_1,\ldots,i_k)\in\psi(L_0)\}$.
	If~$L$ contains some string of length~$n$, then:
	\begin{eqnarray*}
	t(n)&=&\max\{t(x)\mid|x|=n\mbox{ and }x\in L\} \\
		&=& \max\{i_0\mid (i_0,i_1,\ldots,i_m)\in\psi(L_0)_{\min}\mbox{ and }n=i_1+\cdots+i_m\}\,,
	\end{eqnarray*}
	where~$\psi(L_0)_{\min}$ is defined in~(\ref{eq:psiMin}).
	
	In the word-bounded case, i.e.,~$L\subseteq w_1^*w_2^*\cdots w_m^*$, where~$w_1,w_2,\ldots,w_k$ are fixed words on an alphabet~$\Sigma$, we first use Lemma~\ref{lemma:inverseHomomorphism} as follows.
	We introduce the~$m$-symbol alphabet~$\Delta=\{a_1,a_2,\ldots,a_m\}$ and the 
	morphism~$h:\Delta\rightarrow\Sigma^*$ defined by~$h(a_i)=w_i$, $i=1,\ldots,m$.
	From the  \pda~$\ma{M}$ we build a \pda~$\ma{M}_h$ recognizing the 
	language~$L_h=h^{-1}(L)\cap(a_1^*a_2^*\cdots a_m^*)$ and such that for each~$x\in L_h$ and~$k\geq 0$, 
	$\ma{M}_h$ has an accepting computation on input~$x$ which makes~$k$ turns if and only if~$\ma{M}$ has an accepting computation on
	input~$h(x)$ that makes~$k$ turns.
	We then apply the above-described process for the strictly-letter bounded case, starting
	from the \pda~$\ma{M}_h$, hence obtaining a \pda~$\ma{M}_0$.
	Let us denote~$t_h$ the function giving the number of turns made by~$\ma{M}_h$.
	For~$a_1^{i_1}\cdots a_m^{i_m}\in L_h$ we have 
	that~$t_h(a_1^{i_1}\cdots a_m^{i_m})=t(w_1^{i_1}\cdots w_m^{i_m})$.
	Furthermore~$|w_1^{i_1}\cdots w_m^{i_m}|=|w_1|\cdot i_1+\cdots+|w_m|\cdot i_m$.
	This allows to conclude that if~$L$ contains some string of length~$n$ then:
	\begin{eqnarray*}
	t(n)&=&\max\{t(x)\mid|x|=n\mbox{ and }x\in L\} \\
			&=& \max\{i_0 \mid (i_0,n)\in T\}\,,
	\end{eqnarray*}
	where
	\begin{eqnarray*}
		T=\{(i_0,n) \mid (i_0,i_1,\ldots,i_m)\in\psi(L_0)_{\min}\mbox{ and }n=|w_1|\cdot i_1+\cdots+|w_m|\cdot i_m\}\,.
	\end{eqnarray*}
	Begin~$\psi(L_0)_{\min}$ semilinear, according to Lemma~\ref{lemma:minSemilinear}, also the set~$T$ and the 
	set~$T_{\max}=\{(i_0,n)\in T\mid\mbox{ if $(k,n)\in T$ then~$i_0\geq k$}\}$
	are semilinear.
	
	Furthermore, $t(n) = i_0$ if and only if~$(i_0,n)\in T_{\max}$.
	According to Lemma~\ref{lemma:linearGrow} we then
	conclude that if~$t(n)\notin O(1)$ then~$t(n)\notin o(n)$.
\qed
\end{proof}

\section{Beyond the \weak\ case for bounded languages}
\label{sec:accept}

In the previous section we have shown that in the case of \pdas\ accepting bounded languages,
if the number of turns under the \weak\ measure is not bounded by any constants, then it
linearly grows with respect to the length of the input. This result does not hold in the
case of general languages, namely by removing the restriction to bounded languages.
Indeed, in~\cite{Pig25} it was proved that for each integer~$k$ there exists a language
for which a number of turns of the order of~$\log^{(k)}$ (the composition of the logarithm 
function~$k$ times with itself) is necessary and sufficient, so obtaining an infinite hierarchy, 
with a nonconstant sublinear number of turns.
However, as we prove in this section, if we consider \emph{all accepting} computations, then
when the number of turns is nonconstant, it linearly grows with respect to the length of the input.

\begin{theorem}
\label{th:accept}
	Let~$\ma{M}$ a \pda\ accepting a language~$L\subseteq\Sigma^*$.
	Let~$t(n)$ be the maximum number of turns made by accepting computations on inputs of length~$n$ 
	(\accept\ measure), if any\footnote{In case it does not exist any accepted input of length~$n$, we can stipulate~$t(n)=0$-}.
	If~$t(n)\notin O(1)$ then~$t(n)\notin o(n)$.
	Furthermore, if the number of accepting computations on each~$w\in L$ is
	finite, then~$t(n)=O(n)$.
\end{theorem}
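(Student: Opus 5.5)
Apply the turn-counting construction of Lemma~\ref{lemma:countTurns} to~$\ma{M}$, obtaining the \pda~$\ma{M}_0$ over~$\Sigma\cup\{a_0\}$ and its language~$L_0$. Since we only care about input lengths, compose with the morphism sending every symbol of~$\Sigma$ to a single letter~$b$ and~$a_0$ to itself. Context-free languages are closed under morphisms, so the image is context-free and its Parikh image is semilinear. This image is
\[
P=\{(k,n)\mid \mbox{some $w\in L$ with $|w|=n$ has an accepting computation of~$\ma{M}$ making exactly $k$ turns}\}\subseteq\IN^2 .
\]
Thus~$t(n)=\max\{k\mid (k,n)\in P\}$ whenever this maximum exists, and the whole argument reduces to analysing the semilinear set~$P$. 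Unlike Theorem~\ref{th:lowerBound}, no minimization over computations is needed, so the bounded-language restriction is unnecessary.

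\emph{Lower bound.} Assume~$t(n)\notin O(1)$. Then~$P$ contains pairs with arbitrarily large first component. Write~$P$ as a finite union of linear sets~$\{\mathbf x_0+\sum h_i\mathbf x_i\}$ with~$\mathbf x_i=(x'_i,x''_i)$. Some linear component must have a period~$\mathbf x_i$ with~$x'_i>0$.
\begin{itemize}
\item If~$x''_i>0$, fix the other coefficients at~$0$. This gives infinitely many~$n=x''_0+hx''_i$ with~$t(n)\geq x'_0+hx'_i$, which is linear in~$n$. Hence~$t(n)\notin o(n)$, exactly as in Lemma~\ref{lemma:linearGrow}.
\item If~$x''_i=0$, then some~$n$ admits infinitely many turn counts, so~$t(n)$ is not finite at that~$n$.
\end{itemize}
The main obstacle is the meaning of~$t(n)$ in the second case, since the statement speaks of a maximum. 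I would treat an infinite value as certainly not in~$o(n)$. If one insists on finite values, this case is exactly the one ruled out by the extra hypothesis of the second part. Indeed, infinitely many turn counts for strings of the single length~$n$ force some single~$w$ with~$|w|=n$ to have infinitely many accepting computations, because there are only finitely many strings of length~$n$.

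\emph{Upper bound under the finiteness hypothesis.} Assume every~$w\in L$ has only finitely many accepting computations. By the observation just made, every period~$(x'_i,x''_i)$ of every linear component of~$P$ with~$x'_i>0$ must have~$x''_i>0$. Let~$c$ be the maximum of~$x'_i/x''_i$ over such periods, and~$d$ the maximum first component of the base vectors. For~$(k,n)\in P$ in a given component,
\[
k=x'_0+\sum h_ix'_i\leq d+c\sum h_ix''_i\leq d+cn ,
\]
where periods with~$x'_i=0$ contribute nothing to~$k$. Therefore~$t(n)\leq cn+d$, that is,~$t(n)=O(n)$.
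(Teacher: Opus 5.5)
You build the same semilinear set as the paper: your morphism onto $\{a_0,b\}$ does the job of the paper's relabelling of the transitions of $\ma{M}_0$. From there you argue directly on the linear components of $P$, instead of passing through $S_{\max}$ (Lemma~\ref{lemma:minSemilinear}) and Lemma~\ref{lemma:linearGrow}. Your treatment of the second claim is correct and needs no Presburger arithmetic. A period $(x',0)$ with $x'>0$ forces, by pigeonhole over the finitely many strings of one length, an input with infinitely many accepting computations. Your estimate $k\leq d+cn$ then spells out the step the paper only indicates with ``by refining this argument''. Under the finiteness hypothesis your proof of the first claim is also complete.

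The gap is the first claim \emph{without} that hypothesis, i.e.\ the branch $x''_i=0$, which you close only by a convention. Membership in $o(n)$ is an asymptotic property, so one length $n_0$ with unboundedly many turn counts says nothing about $t$ on the other lengths. Suppose $P$ contains both $\{(h,n_0)\mid h\in\IN\}$ and $\{(h,h)\mid h\in\IN\}$. Nothing in your argument stops you from picking the period $(1,0)$ and never examining the finite values of $t$, which here are unbounded. Your fallback proves the first claim only under the hypothesis of the second.

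This is exactly what the paper's detour buys:
\begin{itemize}
\item $S_{\max}$ is Presburger-definable, hence semilinear.
\item It automatically discards the lengths on which the supremum is infinite.
\item Being the graph of a function, it has no period $(x',0)$ with $x'>0$.
\end{itemize}
So Lemma~\ref{lemma:linearGrow}, applied to $f_{S_{\max}}$ (which coincides with $t$ wherever $t(n)$ is finite), shows that the finite values of $t$, if unbounded, are not $o(n)$. This needs no assumption on the number of accepting computations.

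You can repair your proof by running your ray argument on $S_{\max}$, or on $P$ restricted to the lengths with finitely many turn counts (also Presburger-definable), instead of on $P$. Alternatively, if you read $t(n)$ as a supremum that may equal $\infty$, make the case split global. Either some component of $P$ has a period with both coordinates positive, and your ray argument bounds the supremum from below. Or every period with $x'>0$ has $x''=0$; then $t$ is bounded by the largest first coordinate of a base vector on every length where it is finite, so $t\notin O(1)$ forces $t(n)=\infty$ for infinitely many $n$.
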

\begin{proof}
	According to Lemma~\ref{lemma:countTurns}, from~$\ma{M}$ we can obtain a \pda~$\ma{M}_0$, 
	with input alphabet~$\Sigma\cup\{a_0\}$, where~$a_0\notin\Sigma$, such that a 
	string~$w\in(\Sigma\cup\{a_0\})^*$ is accepted by~$\ma{M}_0$ if and only if the 
	string~$\pi_\Sigma(w)$ is accepted by a computation of~$\ma{M}$ that makes 
	exactly~$|w|_{a_0}$ turns.
	We modify the input alphabet of~$\ma{M}_0$ to be~$\{a_0,a\}$: in each transition that
	reads a symbol~$\sigma\in\Sigma$, the input symbol is replaced by~$a$.
	In this way, a string~$w$ is accepted by the resulting machine if and only if the original
	\pda~$\ma{M}$ has an accepting computation on an input of length~$|w|_a$ that makes~$|w|_{a_0}$
	turns, namely the Parikh image of the language accepted by~$\ma{M}_0$, after this modification,
	is the set
	\[
	S = \{(t,n) \mid \ma{M} \mbox{ has an accepting computation on an input of length~$n$
	that makes~$t$ turns}\}\,,
	\]
	which, so, is semilinear. 
	Hence, fixed~$n$, the maximum number of turns to accept inputs of length~$n$ is the maximum~$t$ such that~$(t,n)\in S$.
	This leads to consider the following set which,
	according to Lemma~\ref{lemma:minSemilinear}, is semilinear:
	\[
	S_{\max}= \{(i_0,n)\in S \mid \forall t\in\IN, \mbox{ if $(t,n)\in S$ then $i_0\geq k$}\}\,.
	\]
	In other words,~$(i_0,n)\in S_{\max}$ implies~$t(n)=i_0$.
	Thus, from Lemma~\ref{lemma:linearGrow} it follows that if~$t(n)\notin O(1)$ then~$t(n)\notin o(n)$.
	We also observe that if for some~$n\in\IN$ there is some input of length~$n$ with infinitely many accepting
	computations, the value~$t(n)$ could be infinite. In this case the set~$S_{\max}$ does not
	contain any pairs with second component~$n$ (i.e., on such~$n$ the value of the function~$f_{S_{\max}}$
	defined in Lemma~\ref{lemma:linearGrow} is~$0$ and does not represent the value~$t(n)$.
	Clearly, in that case we cannot say that~$t(n)$ is linearly bounded by~$n$.
	However, if for each accepted input the number of accepting computations is finite,
	then Lemma~\ref{lemma:linearGrow} allows to conclude that~$t(n)=O(n)$.
\qed
\end{proof}

If for some input there are infinitely many accepting computations that arbitrarily increase
the number of turns, then each of these computations should contain some ``useless'' sequence of 
$\varepsilon$-moves that could be skipped to obtain a shorter ``reduced'' computation, i.e., 
a computation that on ``real'' input symbols makes the same sequence of transitions. If we
consider only reduced accepting computations, we can upper limit the number of turns
by a linear function, as in Theorem~\ref{th:accept}. We address the reader to~\cite[Sect.~8]{PP25}
for a similar discussion.

Since in the deterministic case each input can have at most one accepting computation and so
the \accept\ and the \weak\ notion coincide, from Theorem~\ref{th:accept} we obtain:

\begin{corollary}
	Let~$\ma{M}$ be a \emph{deterministic} pushdown automaton accepting each input of
	length~$n$ in~$t(n)$ turns (\weak\ measure). If~$t(n)\notin O(1)$ then~$t(n)\in O(n)\setminus o(n)$.
\end{corollary}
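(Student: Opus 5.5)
This corollary is a direct specialization of Theorem~\ref{th:accept}, so the plan is to check that its hypotheses hold in the deterministic case and that the \weak\ and \accept\ measures coincide there. First I would observe that a deterministic \pda\ $\ma{M}$ has, for each input $w$, at most one computation starting from the initial configuration. Hence it has at most one accepting computation. I would state this carefully for acceptance by empty store: once the pushdown is empty, no further move is possible, so the accepting computation is unique and finite. In particular the minimum over accepting computations (\weak) and the maximum over accepting computations (\accept) are taken over the same single computation. Therefore the function $t(n)$ of the statement coincides with the function $t(n)$ of Theorem~\ref{th:accept}.

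Next I would apply Theorem~\ref{th:accept}. Its first part gives $t(n)\notin o(n)$ whenever $t(n)\notin O(1)$. For the upper bound, I would use the second part of Theorem~\ref{th:accept}, which requires that the number of accepting computations on each $w\in L$ be finite. This holds because that number is at most one. Hence $t(n)=O(n)$, and combining the two parts gives $t(n)\in O(n)\setminus o(n)$.

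The only delicate point I expect is the finiteness of the unique accepting computation. A deterministic \pda\ could in principle loop forever on $\varepsilon$-moves. However, such an infinite computation never empties the store, since an empty store halts the machine, so it is not accepting. Thus each accepted input has exactly one accepting computation, and that computation is finite. This also guarantees that $t(n)$ is finite, so the set $S_{\max}$ used in the proof of Theorem~\ref{th:accept} correctly represents $t(n)$ for every $n$. With this remark the corollary follows immediately.
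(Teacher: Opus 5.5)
Your proposal is correct and follows the paper's own argument. Determinism leaves at most one accepting computation per input, so the \weak\ and \accept\ measures coincide and the finiteness hypothesis of the second part of Theorem~\ref{th:accept} holds, which gives both $t(n)\notin o(n)$ and $t(n)=O(n)$. Your extra remark, that an empty store halts the machine so the unique accepting computation is finite and $t(n)$ is well defined, makes explicit a point the paper leaves implicit.
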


We can extend the results of this section to the \strong\ case, i.e., by considering
\emph{all computations}, regardless the fact that they are accepting or rejecting.
The rough idea is to modify the given \pda\ in order to make each computation accepting,
at the possible price of one extra turn, and use the results for the \accept\ case.
However, this requires to take into considerations computations that stop
before reading the entire input and computations entering infinite loops.
This can be done applying the techniques explained in~\cite[Sect.~7]{PP25}.

\section{Final considerations}

We proved that the number of turns made by a \pda\ to recognize a bounded context-free language, when not limited
by any constants, should linearly grow with respect to the length of the input, even if we consider, for each accepted input, only the less expensive accepting
computation (\weak\ measure). As proved in~\cite{Pig25}, the same does not hold if we remove the restriction to
bounded languages. However, if for each accepted input we consider the most expensive accepting computation
(\accept\ measure), then we obtain again a linear grow even in the general case.

One could ask if the same results hold for more general devices.
We point out that \emph{two-counter automata} (namely finite automata with a one-way input tape and two counters,
that can be increased, decreased and tested by zero), can recognize the set of powers of
two written in unary notation, i.e., the language
\[
L_{p} = \{a^{2^k}\mid k\geq 0\}
\]
by reversing the counters from an increment to a decrement phase a logarithmic number of times,
i.e., in~$O(\log n)$ turns, using a \emph{deterministic} machine.
So the linear lower bound for the \weak\ measure in the case of bounded languages, as well as the
linear lower bound for the \accept\ case for general languages, do not hold if the pushdown store is replaced by 
two counters.

A related result has been proved in~\cite{FKMMZ23} for machines with a fixed number of counters.
In that paper, the authors considered turns (called \emph{reversals}) in counter machines (without input).
They prove that in a computation consisting of~$n$ steps from the initial configuration to some fixed target configuration,
the number of turns, when not bounded by any constants, should grow at least as~$\log n$.
Furthermore such a bound can be reached. \emph{All computations} of~$n$ steps leading to the target
configuration are taken into account, hence the \accept\ measure is used.

Another possible generalization is to machines that can scan the input tape in both directions.
It is well-known that this possibility does not increase the computational power of finite automata, even if
it can be useful to have smaller representations
(for recent results on the descriptional complexity of two-way finite automata accepting letter-bounded language 
we address the reader to~\cite{CPP25}).
However, in the case of more powerful devices, the possibility of scanning the input tape in
both directions can strongly increase the computational power.
We can observe that the above-mentioned language~$L_p$ is accepted by a deterministic one-counter automaton with
a two-way input tape, using a logarithmic number of turns and a logarithmic number of reversals of
the input head.

\bibliographystyle{eptcs}
\bibliography{biblio}

\begin{thebibliography}{10}
\providecommand{\bibitemdeclare}[2]{}
\providecommand{\surnamestart}{}
\providecommand{\surnameend}{}
\providecommand{\urlprefix}{Available at }
\providecommand{\url}[1]{\texttt{#1}}
\providecommand{\href}[2]{\texttt{#2}}
\providecommand{\urlalt}[2]{\href{#1}{#2}}
\providecommand{\doi}[1]{doi:\urlalt{https://doi.org/#1}{#1}}
\providecommand{\eprint}[1]{arXiv:\urlalt{https://arxiv.org/abs/#1}{#1}}
\providecommand{\bibinfo}[2]{#2}

\bibitemdeclare{inproceedings}{CPP25}
\bibitem{CPP25}
\bibinfo{author}{Alessandro \surnamestart {Clerici Lorenzini}\surnameend},
  \bibinfo{author}{Giovanni \surnamestart Pighizzini\surnameend} \&
  \bibinfo{author}{Luca \surnamestart Prigioniero\surnameend}
  (\bibinfo{year}{2025}): \emph{\bibinfo{title}{Two-Way Automata and Bounded
  Languages}}.
\newblock In \bibinfo{editor}{Giuseppa \surnamestart Castiglione\surnameend} \&
  \bibinfo{editor}{Sabrina \surnamestart Mantaci\surnameend}, editors:
  {\slshape \bibinfo{booktitle}{Implementation and Application of Automata -
  29th International Conference, {CIAA} 2025, Palermo, Italy, September 22-25,
  2025, Proceedings}}, {\slshape \bibinfo{series}{Lecture Notes in Computer
  Science}} \bibinfo{volume}{15981}, \bibinfo{publisher}{Springer}, pp.
  \bibinfo{pages}{73--85}, \doi{10.1007/978-3-032-02602-6\_6}.

\bibitemdeclare{inproceedings}{FKMMZ23}
\bibitem{FKMMZ23}
\bibinfo{author}{Alain \surnamestart Finkel\surnameend},
  \bibinfo{author}{Shankara~Narayanan \surnamestart Krishna\surnameend},
  \bibinfo{author}{Khushraj \surnamestart Madnani\surnameend},
  \bibinfo{author}{Rupak \surnamestart Majumdar\surnameend} \&
  \bibinfo{author}{Georg \surnamestart Zetzsche\surnameend}
  (\bibinfo{year}{2023}): \emph{\bibinfo{title}{Counter Machines with
  Infrequent Reversals}}.
\newblock In \bibinfo{editor}{Patricia \surnamestart Bouyer\surnameend} \&
  \bibinfo{editor}{Srikanth \surnamestart Srinivasan\surnameend}, editors:
  {\slshape \bibinfo{booktitle}{43rd {IARCS} Annual Conference on Foundations
  of Software Technology and Theoretical Computer Science, {FSTTCS} 2023,
  {IIIT} Hyderabad, Telangana, India, December 18-20, 2023}}, {\slshape
  \bibinfo{series}{LIPIcs}} \bibinfo{volume}{284}, \bibinfo{publisher}{Schloss
  Dagstuhl - Leibniz-Zentrum f{\"{u}}r Informatik}, pp.
  \bibinfo{pages}{42:1--42:17}, \doi{10.4230/LIPICS.FSTTCS.2023.42}.

\bibitemdeclare{article}{GS64}
\bibitem{GS64}
\bibinfo{author}{Seymour \surnamestart Ginsburg\surnameend} \&
  \bibinfo{author}{Edwin~H. \surnamestart Spanier\surnameend}
  (\bibinfo{year}{1964}): \emph{\bibinfo{title}{Bounded Algol-like Languages}}.
\newblock {\slshape \bibinfo{journal}{Transactions of the American Mathematical
  Society}} \bibinfo{volume}{113}(\bibinfo{number}{2}), p. \bibinfo{pages}{333
  – 368}, \doi{10.1090/S0002-9947-1964-0181500-1}.

\bibitemdeclare{article}{GS66}
\bibitem{GS66}
\bibinfo{author}{Seymour \surnamestart Ginsburg\surnameend} \&
  \bibinfo{author}{Edwin~H. \surnamestart Spanier\surnameend}
  (\bibinfo{year}{1966}): \emph{\bibinfo{title}{Finite-Turn Pushdown
  Automata}}.
\newblock {\slshape \bibinfo{journal}{SIAM Journal on Control}}
  \bibinfo{volume}{4}(\bibinfo{number}{3}), pp. \bibinfo{pages}{429--453},
  \doi{10.1137/0304034}.

\bibitemdeclare{article}{GS66b}
\bibitem{GS66b}
\bibinfo{author}{Seymour \surnamestart Ginsburg\surnameend} \&
  \bibinfo{author}{Edwin~H. \surnamestart Spanier\surnameend}
  (\bibinfo{year}{1966}): \emph{\bibinfo{title}{Semigroups, {P}resburger
  formulas, and languages}}.
\newblock {\slshape \bibinfo{journal}{Pacific Journal of Mathematics}}
  \bibinfo{volume}{16}(\bibinfo{number}{2}), pp. \bibinfo{pages}{285--296},
  \doi{10.2140/pjm.1966.16.285}.

\bibitemdeclare{book}{HB34}
\bibitem{HB34}
\bibinfo{author}{David \surnamestart Hilbert\surnameend} \&
  \bibinfo{author}{Paul \surnamestart Bernays\surnameend}
  (\bibinfo{year}{1934}): \emph{\bibinfo{title}{{Grundlagen der Mathematik ---
  Erster Band}}}.
\newblock {\slshape \bibinfo{series}{{Grundlehren der mathematischen
  Wissenschaften}}}~\bibinfo{volume}{XL}, \bibinfo{publisher}{Springer,
  Berlin}.

\bibitemdeclare{book}{HU79}
\bibitem{HU79}
\bibinfo{author}{J.~E. \surnamestart Hopcroft\surnameend} \&
  \bibinfo{author}{J.~D. \surnamestart Ullman\surnameend}
  (\bibinfo{year}{1979}): \emph{\bibinfo{title}{Introduction to Automata
  Theory, Languages and Computation}}.
\newblock \bibinfo{publisher}{Addison-Wesley}.

\bibitemdeclare{article}{Mal07}
\bibitem{Mal07}
\bibinfo{author}{Andreas \surnamestart Malcher\surnameend}
  (\bibinfo{year}{2007}): \emph{\bibinfo{title}{On Recursive and Non-recursive
  Trade-Offs between Finite-Turn Pushdown Automata}}.
\newblock {\slshape \bibinfo{journal}{J. Autom. Lang. Comb.}}
  \bibinfo{volume}{12}(\bibinfo{number}{1-2}), pp. \bibinfo{pages}{265--277},
  \doi{10.25596/JALC-2007-265}.

\bibitemdeclare{article}{MP13}
\bibitem{MP13}
\bibinfo{author}{Andreas \surnamestart Malcher\surnameend} \&
  \bibinfo{author}{Giovanni \surnamestart Pighizzini\surnameend}
  (\bibinfo{year}{2013}): \emph{\bibinfo{title}{Descriptional complexity of
  bounded context-free languages}}.
\newblock {\slshape \bibinfo{journal}{Inf. Comput.}} \bibinfo{volume}{227}, pp.
  \bibinfo{pages}{1--20}, \doi{10.1016/J.IC.2013.03.008}.

\bibitemdeclare{article}{Parikh1966}
\bibitem{Parikh1966}
\bibinfo{author}{Rohit \surnamestart Parikh\surnameend} (\bibinfo{year}{1966}):
  \emph{\bibinfo{title}{On Context-Free Languages}}.
\newblock {\slshape \bibinfo{journal}{J. {ACM}}}
  \bibinfo{volume}{13}(\bibinfo{number}{4}), pp. \bibinfo{pages}{570--581},
  \doi{10.1145/321356.321364}.

\bibitemdeclare{inproceedings}{Pig25}
\bibitem{Pig25}
\bibinfo{author}{Giovanni \surnamestart Pighizzini\surnameend}
  (\bibinfo{year}{2025}): \emph{\bibinfo{title}{Turn Complexity of Context-Free
  Languages, Pushdown and One-Counter Automata}}.
\newblock In \bibinfo{editor}{Sang{-}Ki \surnamestart Ko\surnameend} \&
  \bibinfo{editor}{Florin \surnamestart Manea\surnameend}, editors: {\slshape
  \bibinfo{booktitle}{Developments in Language Theory - 29th International
  Conference, {DLT} 2025, Seoul, South Korea, August 19-22, 2025,
  Proceedings}}, {\slshape \bibinfo{series}{Lecture Notes in Computer Science}}
  \bibinfo{volume}{16036}, \bibinfo{publisher}{Springer}, pp.
  \bibinfo{pages}{77--91}, \doi{10.1007/978-3-032-01475-7\_6}.
\newblock \bibinfo{note}{An extended version is available on ArXiv at
  \url{https://arxiv.org/abs/2603.08331}}.

\bibitemdeclare{article}{Pig26}
\bibitem{Pig26}
\bibinfo{author}{Giovanni \surnamestart Pighizzini\surnameend}
  (\bibinfo{year}{2026}): \emph{\bibinfo{title}{Push complexity: Optimal bounds
  and decidability}}.
\newblock {\slshape \bibinfo{journal}{Theor. Comput. Sci.}}
  \bibinfo{volume}{1071}, p. \bibinfo{pages}{115839},
  \doi{10.1016/J.TCS.2026.115839}.

\bibitemdeclare{article}{PP23}
\bibitem{PP23}
\bibinfo{author}{Giovanni \surnamestart Pighizzini\surnameend} \&
  \bibinfo{author}{Luca \surnamestart Prigioniero\surnameend}
  (\bibinfo{year}{2023}): \emph{\bibinfo{title}{Pushdown automata and constant
  height: decidability and bounds}}.
\newblock {\slshape \bibinfo{journal}{Acta Informatica}}
  \bibinfo{volume}{60}(\bibinfo{number}{2}), pp. \bibinfo{pages}{123--144},
  \doi{10.1007/S00236-022-00434-0}.

\bibitemdeclare{article}{PP25}
\bibitem{PP25}
\bibinfo{author}{Giovanni \surnamestart Pighizzini\surnameend} \&
  \bibinfo{author}{Luca \surnamestart Prigioniero\surnameend}
  (\bibinfo{year}{2025}): \emph{\bibinfo{title}{Pushdown and one-counter
  automata: Constant and non-constant memory usage}}.
\newblock {\slshape \bibinfo{journal}{Inf. Comput.}} \bibinfo{volume}{306}, p.
  \bibinfo{pages}{105329}, \doi{10.1016/J.IC.2025.105329}.

\end{thebibliography}

\end{document}